\newtheorem{theorem}{Theorem}[section]
\newtheorem{corollary}{Corollary}[section]
\theoremstyle{remark}
\newtheorem*{remark}{Remark}
\newcommand*\taufs{\tau_{\mathcal{S}}}
\newcommand*\taukfs{\tau_{k, \mathcal{S}}}
\newcommand*\tauk{\tau_{k}}
\newcommand*\poobs{Y_{i}^{obs}}
\newcommand*\pot{Y_{i}(t)}
\newcommand*\poc{Y_{i}(c)}
\newcommand*\poz{Y_{i}(z)}
\newcommand*\meanpot{\bar{Y}(t)}
\newcommand*\meanpoc{\bar{Y}(c)}
\newcommand*\meanpoz{\bar{Y}(z)}
\newcommand*\meanpotk{\bar{Y}_k(t)}
\newcommand*\meanpock{\bar{Y}_k(c)}
\newcommand*\meanpozk{\bar{Y}_k(z)}
\newcommand*\meanpozobs{\bar{Y}^{obs}(z)}
\newcommand*\meanpozkobs{\bar{Y}_k^{obs}(z)}
\newcommand*\popmeanpoz{\mu(z)}
\newcommand*\popmeanpot{\mu(t)}
\newcommand*\popmeanpoc{\mu(c)}
\newcommand*\popmeanpozk{\mu_k(z)}
\newcommand*\popmeanpotk{\mu_k(t)}
\newcommand*\popmeanpock{\mu_k(c)}
\newcommand*\model{\mathcal{F}}
\newcommand*\EE{\mathbb{E}}
\newcommand*\nk{n_k}
\newcommand*\nck{n_{c, k}}
\newcommand*\ntk{n_{t, k}}
\newcommand*\nzk{n_{z, k}}
\newcommand*\tauestbk{\widehat{\tau}_{(BK)}}
\newcommand*\tauestcr{\widehat{\tau}_{(CR)}}
\newcommand*\taukest{\widehat{\tau}_{k}}
\newcommand*\Sc{S^{2}(c)}
\newcommand*\St{S^{2}(t)}
\newcommand*\Sz{S^2(z)}
\newcommand*\Stc{S^{2}(tc)}
\newcommand*\scest{s^{2}(c)}
\newcommand*\stest{s^{2}(t)}
\newcommand*\szest{s^{2}(z)}
\newcommand*\Sck{S^{2}_k(c)}
\newcommand*\Stk{S^{2}_k(t)}
\newcommand*\Szk{S^{2}_k(z)}
\newcommand*\Stck{S^{2}_k(tc)}
\newcommand*\sckest{s^{2}_k(c)}
\newcommand*\stkest{s^{2}_k(t)}
\newcommand*\szkest{s^{2}_k(z)}
\newcommand*\sigmac{\sigma^{2}(c)}
\newcommand*\sigmat{\sigma^{2}(t)}
\newcommand*\sigmaz{\sigma^{2}(z)}
\newcommand*\sigmatc{\sigma^{2}(tc)}
\newcommand*\sigmack{\sigma^{2}_k(c)}
\newcommand*\sigmatk{\sigma^{2}_k(t)}
\newcommand*\sigmazk{\sigma^{2}_k(z)}
\newcommand*\sigmatck{\sigma^{2}_k(tc)}
\newcommand*\varestcr{\widehat{\sigma}^{2}_{(CR)}}
\newcommand*\varestk{\widehat{\sigma}^{2}_{k}}
\newcommand*\varestbkone{\widehat{\sigma}^{2}_{(BK)}}
\begin{document}
\def\spacingset#1{\renewcommand{\baselinestretch}%
{#1}\small\normalsize} \spacingset{1}

%%%%%%%%%%%%%%%%%%%%%%%%%%%%%%%%%%%%%%%%%%%%%%%%%%%%%%%%%%%%%%%%%%%%%%%%%%%%%%

  \title{Block what you can, except when you shouldn’t\thanks{\noindent{\textit{Email: } \texttt{nicole.pashley@rutgers.edu}. The authors would like to thank Guillaume Basse, Avi Feller, Colin Fogarty, Michael Higgins, Luke Keele, and Lo-Hua Yuan for their comments and edits on an earlier version of this paper.
We would also like to thank members of Luke Miratrix's and Donald B. Rubin's research labs for their useful feedback on the project and Peter Schochet and Kosuke Imai for insightful discussion of this material.
Special thanks to Joan Heller of Heller Research Associates for access to the data for the numerical example.  Thanks to anonymous reviewers for many comments, including ideas regarding the justifications of blocking in practice.
The research reported here was partially supported by the Institute of Education Sciences, U.S. Department of Education, through Grant R305D150040.
This material is also based upon work supported by the National Science Foundation Graduate Research Fellowship under Grant No. DGE1745303.
Any opinion, findings, and conclusions or recommendations expressed in this material are those of the authors and do not necessarily reflect the views of the National Science Foundation, the Institute of Education Sciences or the U.S. Department of Education.
}}}
    \author{Nicole E. Pashley\\Department of Statistics, Rutgers University  \and Luke W. Miratrix\\Graduate School of Education, Harvard University}
 \maketitle

\bigskip
\begin{abstract}
Several branches of the potential outcome causal inference literature have discussed the merits of blocking versus complete randomization.
Some have concluded it can never hurt the precision of estimates, and some have concluded it can hurt.
In this paper, we reconcile these apparently conflicting views, give a more thorough discussion of what guarantees no harm, and discuss how other aspects of a blocked design can cost, all in terms of precision.
We discuss how the different findings are due to different sampling models and assumptions of how the blocks were formed.
We also connect these ideas to common misconceptions, for instance showing that analyzing a blocked experiment as if it were completely randomized, a seemingly conservative method, can actually backfire in some cases.
Overall, we find that blocking can have a price, but that this price is usually small and the potential for gain can be large.
It is hard to go too far wrong with blocking.
\end{abstract}

\noindent%
{\it Keywords:}  Causal inference; Potential outcomes; Precision; Finite sample inference; Randomization inference; Neymanian Inference
\vfill

\section{Introduction}
\spacingset{1.45} 

Many of us may have embraced George Box's famous quote\footnote{Typically the quote is attributed to Box and appears on page 103 of \cite{box1978statistics}.} ever since it was thrown at us during an undergraduate statistics course:
\begin{center}
``Block what you can and randomize what you cannot.''
\end{center}
``Blocking'' an experiment---the act of grouping similar units together and then randomizing them into treatment and control within each group---is a tool for increasing the precision of an estimate of a treatment impact.
Blocking may also be a consequence of how the units were obtained in the first place.
Many of us interpret this fragment of Box's advice\footnote{Box actually provided more nuanced advice regarding the advantages and disadvantages of blocking than is shown in this popular quote.} to suggest that we should block on whatever characteristics and information we have.
But is blocking always ``worth it''?
Could it ever be a mistake, causing more harm than good?
And if so, when?
In this paper we unpack these questions, showing that their answers depend on details often left implicit when thinking about blocking, such as whether the blocks are formed by the researcher or are inherent to the context, or how the experimental sample is obtained.
We additionally shed light on some common misconceptions regarding comparisons of blocked designs and completely randomized (i.e., unblocked) designs.

While blocking has been investigated from many perspectives \citep[e.g.][]{snedecor1989statistical}, we focus on the causal inference potential outcomes framework.
In this literature there is, on the face of it, disagreement regarding the guarantees of blocking.
\cite{Imai2008} compares the true variance for the matched pairs design (i.e., blocked experiments where all blocks have two units) to complete randomization with random sampling of pairs for both designs and with random sampling of pairs for the matched design but simple random sampling for complete randomization.
That paper concludes ``the relative efficiency of the matched-pair design depends on whether matching induces positive or negative correlations regarding potential outcomes within each pair'' \citep[][p. 4865]{Imai2008}.
A similar comment is made on  p. 101 of \cite{snedecor1989statistical}, with both implying that the matched pairs design may be helpful or harmful. 
In contrast, \cite{imbens2011experimental}, assuming a stratified sampling superpopulation model, claims that ``In experiments with randomization at the unit-level, stratification is superior to complete randomization, and pairing is superior to stratification in terms of precision of estimation'' (p. 1).
\cite{imai_king_stuart_2008} similarly concludes that the variance under the blocked design (with at least two units assigned to treatment and control within each block) is never higher than under complete randomization for a superpopulation set-up.
Researchers are in more agreement regarding finite sample inference, for which blocking can be helpful or harmful.
All of these conclusions are correct: their differences stem from differences in the specific sampling frameworks used for the respective analyses.
While considerations of the finite vs. superpopulation distinction or block size are important, they are not enough: how the blocks are formed and the specific details of any sampling framework being used also matter.
This is the work of this paper.

In particular, we consider multiple block types and sampling mechanisms.
For each combination of these elements considered, we derive expressions comparing the variance of estimates under blocking to those under complete randomization.
This allows us to show which overall frameworks guarantee no harm from blocking and which do not.
We focus on comparing true variance, rather than delving into the problem of variance estimation \citep[see][for discussion of variance estimation]{pashley_main}.
For the superpopulation contexts, we also carefully separate out variance due to the sampling of units from variance due to randomized treatment assignment; this gives more precise statements of the benefits of blocking than have been given in prior literature.
We generally follow the taxonomy of block types and sampling mechanisms discussed in \citet{pashley_main}.
In particular, certain sampling mechanisms go hand in hand with certain types of blocks, where the type of block captures how the blocks are formed.
These frameworks and associated types of blocks are as follows:
\begin{enumerate}
	\item Finite sample: This is the case where the units in the experiment are considered fixed and the only source of randomness is the treatment assignment itself. Here blocks may be formed in any way pre-randomization using measured baseline covariates, and we consider them fixed with the sample. Consider a psychology researcher evaluating an intervention on a convenience sample of students grouped by baseline ability.
	\item Simple random sampling: Here we would sample units from a larger population and then divide the units into blocks based on some covariate(s). These \emph{flexible} blocks are a consequence of the units we have, such that before sampling we may not even know how many blocks we will have or what size they will be. If our researcher from above viewed their experimental sample of units as randomly drawn from some target population, we would use this setting.
	\item Stratified random sampling: Here we sample a specified number of units from each of a pre-specified set of strata, and then randomize within these groups. Consider a researcher recruiting a sample of children from each of a series of ages, and blocking by age. The blocks are inherently due to \emph{fixed} aspects of the units themselves, which also define the population strata. This setting is noteworthy as it is often the implicit assumption made when comparing blocking to complete randomization.
	\item Random sampling of strata: In this framework, the population is made up of an infinite number of blocks and entire blocks are sampled as units. This could be, e.g., a random sample of twins. The blocks themselves are \emph{structural} in that they are naturally grouped as a product of the world, not the researcher. This setting is often assumed when comparing the matched pairs design to complete randomization.
	\item Two stage sampling: In this framework, we first select a sample of strata from an infinite population of strata, as just above, but then, within each sampled stratum, draw a sample of individuals. We consider the selected strata to be of infinite size, making the sampled individuals within each block a sample drawn from a stratum-specific superpopulation.  These superpopulations are not fixed as they are in the stratified random sampling case due to the sampling of strata. Consider a \emph{multisite trial} where a sample of students is obtained for each of a random sample of schools.
\end{enumerate}
In Section~\ref{sec:bk_vs_cr} we, for each of these contexts, carefully compare blocking vs. complete randomization and provide closed form formulae for how the variances of an estimator under these approaches will differ.
We also use these formulae to provide guidance on using blocking across different contexts.
We provide the derivations for these formulae in Supplementary Material~\ref{append:cr_vs_bk}.

Once we finish these comparisons we broaden our scope to consider how blocking is commonly done in practice.
Our primary investigation, following prior literature on this tension, only pertains to the rare case when all blocks manage to have the exact same proportion of units treated.
This is not often the case.
If we do not assume equal proportions across blocks, then units in the same treatment arm can be weighted differently, which can reduce the efficiency of blocking regardless of framework.
We discuss this cost further in Section~\ref{sec:cr_vs_bk_unequal}.

Next, in Section~\ref{sec:ignore_bk}, we address two common misconceptions regarding blocking.
We first examine the question of whether analyzing a blocked randomized experiment as if it were completely randomized is in fact a safe approach.
It is not: for the same reasons that implementing blocking could potentially have a cost, ignoring it can as well.
We then discuss whether the variance estimator for a completely randomized experiment is more stable than that of a blocked experiment on the same data.
This is not necessarily the case: the variance of the variance estimator of a completely randomized experiment can either be more, or less, stable than the variance estimator for a blocked experiment.

Finally, in Section~\ref{sec:sims}, we move away from the theoretical discussion and present a few illustrative simulations.
We first present a range of scenarios from blocking being mildly costly to blocking being very advantageous to underscore how even slight success in the grouping of units easily offsets blocking's cost.
We also show that, while a researcher can deliberately form blocks to reduce precision, it is difficult, but not impossible, for a researcher to unintentionally form blocks from a random sample of units in a way that is disadvantageous.

Throughout our paper, we show that while claims of ``no harm'' are often unfounded, the potential harm is generally going to be minimal.
Our primary aim is to unpack the tensions at play in order to help guide further work in the field, and help to lay to rest this apparent debate.
In the end, we advise researchers to not be afraid to block, but to not bother blocking on things that are unlikely to be related to one's outcomes.
We believe our findings will generalize to related designs intended to reduce imbalance on covariates between the treatment and control groups, such as rerandomization \citep{morgan2012rerandomization, morgan2015rerandomization, branson2016improving, li2018asymptotic, schultzberg2019re}, but we do not explore that connection here. 

\section{Set Up} \label{sec:setup}
Before delving into comparisons of blocking and complete randomization, we first review these two experimental designs, define our notation, and review some standard, well-known, results.

\subsection{Experimental designs and estimands}
We use the Neyman-Rubin model of potential outcomes (Rubin, \citeyear{rubin_1974}; Splawa-Neyman et al., 1923/\citeyear{neyman_1923}), and assume the Stable Unit Treatment Value Assumption, i.e. no interference and no multiple forms of treatment \citep[][]{rubin_1980}.
In this framework, each unit has a potential outcome under treatment and a potential outcome under control, denoted $\pot$ and $\poc$, respectively, for unit $i$ ($i = 1,\dots,n$).
We consider two experimental designs on a sample of $n$ units: complete randomization and blocked randomization.
Under a completely randomized design $n_{t} = np$ of the units are randomly assigned to treatment, with the rest of the $n_{c} = n- n_{t}$ units assigned to control, for fixed $p \in (0,1)$.
Under a blocked design, the units are split into $K$ blocks in some manner \citep[see][for longer discussion of block types]{pashley_main}, with $\nk$ units in the $k$th block ($k = 1,\dots,K$).
Within each block a completely randomized experiment is performed independently of other blocks, such that in the $k$th block $\ntk = p_k\nk$ are randomly assigned to treatment for fixed $p_k \in (0,1)$.
For most of the paper, except where noted, we assume $p_k = p$ for $k = 1,\dots,K$.

In addition to having two experimental designs, we also consider finite sample inference and superpopulation inference.
In finite sample inference, the units in the sample and their potential outcomes are considered fixed and randomness comes solely from the treatment assignment mechanism.
The estimand for the finite sample is then the sample average treatment effect (SATE) defined as (see, e.g., \cite{CausalInferenceText}, p. 86)
\[\taufs = \frac{1}{n} \sum_{i=1}^{n}\big(\pot - \poc\big).\]
Under blocking, we can define the SATE within block $k$, for $k=1,...,K$, as
\[\taukfs = \frac{1}{\nk} \sum_{i: b_{i} = k}\big(\pot - \poc\big),\]
 where $b_i \in \{1,\dots,K\}$ indicates the block that unit $i$ belongs to.

In the superpopulation setting, we wish to make inference for some (infinite) superpopulation rather than just the units in our experiment.
We therefore need to consider the randomness induced by sampling units from the population into the sample.
We thus have two sources of randomness: the sampling and the treatment assignment mechanism.
We write our estimand in the superpopulation setting, the population average treatment effect (PATE), as
\[\tau = \EE[\pot - \poc].\]
The unconditioned expectation denotes a direct average for all units in the superpopulation.
We can similarly define the PATE within block $k$ as
\begin{align*}
\tauk &=\EE[\pot - \poc|b_{i} = k].
\end{align*}

\subsection{Estimation and variance}\label{subsec:est_var}
There is a different standard treatment effect estimator for complete randomization and blocked randomization.
However, the estimators are the same for each design whether we are interested in the SATE or PATE (at least in the settings considered here).
Let $Z_{i} = t$ if unit $i$ is assigned treatment and $Z_{i} = c$ if unit $i$ is assigned to control, so that $\poobs = Y_i(Z_i)$ is the outcome we observe for unit $i$ given a specific treatment $Z_i$.
For complete randomization, the estimator is just the simple difference in means between treatment and control units,
\[\tauestcr= \frac{1}{n_{t}}\sum_{i=1}^{n} \mathbb{I}_{Z_i = t} \pot  - \frac{1}{n_{c}} \sum_{i=1}^{n} (1-\mathbb{I}_{Z_i = t}) \poc,\]
where $\mathbb{I}_{Z_i = t}$ is the indicator that unit $i$ received treatment.
For blocked randomization, we can define this estimator within each block as 
\[\taukest = \frac{1}{\ntk}\sum_{i: b_{i}=k} \mathbb{I}_{Z_i = t} \pot  - \frac{1}{\nck} \sum_{i: b_{i} =k} (1-\mathbb{I}_{Z_i = t}) \poc,\]
$k=1,...,K$.
Then the overall blocked randomization estimator is a weighted average of these simple difference estimators for each block,
\[\tauestbk = \sum_{k=1}^K\frac{\nk}{n}\taukest. \]

We will often take the expectation over the randomization of units to treatment for a fixed sample.
In particular, we write the expectation of estimator $\widehat{\tau}$ for a given finite sample $\mathcal{S}$ and for some assignment mechanism $\textbf{P}$, which may be complete randomization or blocked randomization, as $\EE\left[ \widehat{\tau} | \mathcal{S}, \textbf{P}\right]$.
To reduce clutter, we drop the \textbf{P} and simply write $\EE\left[ \widehat{\tau} | \mathcal{S}\right]$ if the estimator makes the assignment mechanism clear.
For superpopulation settings, we condition on the sampling mechanism, block type, and assignment mechanism, e.g.  $\EE\left[ \widehat{\tau} | \model \right]$, where $\model$ is some framework

Both treatment effect estimators are generally unbiased (save for random sampling of strata and two stage sampling as discussed in the remark below), with respect to their appropriate design, under the finite sample and superpopulation frameworks considered here.
So the difference in performance between these two experimental designs and associated estimators comes down to the differences in variance.

To discuss the true variances of our estimators $\tauestcr$ and $\tauestbk$, we need some additional notation \citep[we will follow conventions found in][]{CausalInferenceText}.
The sample variance of potential outcomes under treatment $z$ is
\[\Sz = \frac{1}{n-1} \sum_{i = 1}^{n}(\poz - \meanpoz)^{2},\]
where
\[\meanpoz = \frac{1}{n}\sum_{i=1}^{n}\poz\]
is the mean of the potential outcomes for the units in the sample under treatment $z$.
The sample variance of the individual level treatment effects is
\[\Stc = \frac{1}{n - 1}\sum_{i=1}^{n}\Big(\pot - \poc - \taufs\Big)^{2}.\]
$\meanpozk$, $\Szk$, and $\Stck$ are defined analogously over the units in block $k$.

For the finite sample, the variance of the completely randomized estimator is known to be (Splawa-Neyman et al., 1923/\citeyear{neyman_1923})
\[\text{var}(\tauestcr|\mathcal{S}) = \frac{\St}{n_{t}} + \frac{\Sc}{n_{c}} - \frac{\Stc}{n}.\]
We can use this expression within each block to get block level variances, 
\begin{equation*}
\text{var}(\taukest|\mathcal{S}) = \frac{\Stk}{\ntk} + \frac{\Sck}{\nck} - \frac{\Stck}{\nk}. %\label{eq:var_fs_cr}
\end{equation*}
Summing these across the independent blocks, with the weights for block sizes, gives an overall variance of
\begin{align*}
\text{var}\left(\tauestbk|\mathcal{S}\right)& = \sum_{k=1}^{K}\frac{\nk^{2}}{n^{2}}\text{var}(\taukest |\mathcal{S})= \sum_{k=1}^{K}\frac{\nk^2}{n^2}\Big(\frac{\Stk}{\ntk} + \frac{\Sck}{\nck} - \frac{\Stck}{\nk}\Big).%\label{eq:var_fs_bk}
\end{align*}

For the superpopulation, we can use a variance decomposition to obtain, under superpopulation setting $\model$,
\begin{align*}
\text{var}\left(\widehat{\tau}|\model\right) = \EE\left[\text{var}\left(\widehat{\tau}|\mathcal{S}\right)|\model\right] +  \text{var}\left(\EE\left[\widehat{\tau}|\mathcal{S}\right]|\model\right).
\end{align*}
We provide simplified formulae in the following sections when possible to do so.
%We will note in the following sections which settings allow for simplification of this expression.

\begin{remark}
It is worth noting that under the frameworks for random sampling of strata and two stage sampling, our treatment effect estimators, for both blocking and complete randomization, are not actually unbiased for $\tau$.
This bias is induced by the random number of units in the sample creating a random denominator in our estimator.
Under the conditions given in \cite{pashley_main} to obtain variance estimators in this setting, in particular either conditioning on the block sizes or assuming that block sizes are independent of block treatment effects, the estimators are unbiased for the PATE.
With respect to comparing variances, the bias is the same under either design (which can be shown by a simple application of the law of total expectation).
Thus, these comparisons are still relevant.
\end{remark}

\section{Blocking vs. complete randomization\ldots}\label{sec:bk_vs_cr}

With the tools from the previous section, we now turn to comparing complete randomization to blocking.
In the following sections, we systematically explore the difference in variance of our two designs in the finite sample and under a number of superpopulation frameworks.
At the end of this section, Table~\ref{tab:sum} gives an overview about which frameworks provide a guarantee that blocking will be as good or better in terms of precision.

\subsection{\ldots in the finite sample world}\label{sec:bk_vs_cr_fin}

The finite setting is fairly well established.
Here we present a result similar to those previously presented in other papers, such as \cite{imai_king_stuart_2008} and \cite{Miratrix2013}.
In particular, the difference in variance of the treatment effect estimator between the completely randomized design and the blocked design, in the finite sample, is
\begin{align}
&\text{var}\left(\tauestcr|\mathcal{S}\right) - \text{var}\left(\tauestbk|\mathcal{S}\right) \label{eq:cr_vs_bk_fin_samp} \\
&=\frac{1}{n-1}\left[ \text{Var}_k\left(\sqrt{\frac{p}{1-p}}\meanpock + \sqrt{\frac{1-p}{p}}\meanpotk\right) - \sum_{k=1}^K\frac{n_k}{n}\frac{n-\nk}{n}\text{var}(\taukest|\mathcal{S})\right]  \nonumber
\end{align}
\noindent where
\begin{align}
&\text{Var}_k\left(X_k\right) \equiv \sum_{k=1}^K\frac{n_k}{n}\left(X_k - \sum_{j=1}^K\frac{n_j}{n}X_j\right)^2. \label{eq:var_k_def}
 \end{align}

Whether this quantity is positive or negative depends on whether a particular form of between block variation is larger than a form of within block variation.
Finite sample numerical studies in Section~\ref{sec:sims} show an example where even in the worst case for blocking, when all blocks have the same distribution of potential outcomes, the increase in variance is not too great.
A further numerical illustration in Section~\ref{sec:applied_example} further suggests that costs are minimal in practice.

The first term is simply how much variation in means we have across groups.
The second is driven by the average within-group variation.
If the groups are quite variable internally, but are similar in terms of averages, blocking can hurt.

Most prior work state that although the difference in the brackets can be negative, as the sample size grows this difference will go to a non-negative quantity.
However, this statement depends on the type of blocks we have.
In particular, if we have structural blocks such that as $n$ grows, the number of blocks $K$ also grows, the difference in the brackets of Equation~\ref{eq:cr_vs_bk_fin_samp} will not necessarily go to zero or become positive as $n \to \infty$ \citep{Miratrix2013}.
Consider, for example, if $n_k$ is fixed, the above formula's second term is simply a constant times the mean variance.
Therefore a ``bad'' choice of structural blocks, in terms of between block variance being lower than within block variance, could have asymptotic consequences.
This has ties to the random sampling of strata framework, as discussed in Section~\ref{subsec:bk_vs_cr_finstrat}. 

As a final note, we can see the structure of Equation~\ref{eq:cr_vs_bk_fin_samp} more starkly if we consider the case of $K$ equal sized blocks with $n_k = n/K$ units in each, $p = 1/2$, and no treatment effect, with $Y_i(c) = Y_i(t)$ for all units.
In this case we obtain
\begin{align*}
\text{var}\left(\tauestcr|\mathcal{S}\right) - \text{var}\left(\tauestbk|\mathcal{S}\right)
	&=\frac{1}{n-1}\left[ 4\text{Var}_k\left( \meanpock \right) -
	    \frac{4(K-1)}{n} \frac{1}{K}\sum_{k=1}^K  \Sck \right] .
\end{align*}
This simplification makes the comparison of within versus between block variability more clear.

\subsection{\ldots with simple random sampling and flexible blocks}%\label{subsec:bk_vs_cr_srs}

Given a simple random sample of units, the variance for the completely randomized design yields the following clean and well known result (see \cite{CausalInferenceText}):
\[\text{var}(\tauestcr|\model_{\text{SRS}}) = \frac{\sigmat}{n_{t}} + \frac{\sigmac}{n_{c}},\]
where $\sigmaz$ is the variance of potential outcomes in the (infinite) superpopulation under treatment $z \in \{t,c\}$.

Flexible blocking in this context would be to divide the units into groups using any baseline covariate information we might like, after the sample of units is obtained.
This common setting is what might happen in, for example, experiments that have an initial recruitment drive, with the researchers then tailoring the design of their experiment to the obtained sample.
In these cases we might group by some categorical covariate, aggregating those types of units too few in number to make their own blocks into a single ``left over'' block.
Or we might form blocks out of a continuous covariate by clustering units together as best we can based on the observed sample covariate distribution.
We denote this population framework by conditioning on $\model_{\text{SRS}}$.

The difference of variances between blocking and complete randomization is simply the expectation over Equation~\ref{eq:cr_vs_bk_fin_samp} with respect to the simple random sampling.
In this context, to understand how blocking compares to complete randomization we need to specify the mechanism of how the blocks are formed.
Assume our blocking procedure uses observed baseline covariate information $X$ to group units into blocks and is such that, given a fixed constellation of $X$ values, we will always end up with the same number of blocks with the same values of $X$ within each block.
Further assume that any units with the same values of $X$, indistinguishable to the blocking algorithm, are interchangeable.\footnote{We believe algorithms that will end up with random sizes of blocks, even for the same array of units, should follow this argument with a further conditioning step.  Formal justification of this is beyond the scope of this work.}
%We believe algorithms of this form would be the most common in practice. Some algorithms maybe essentially deterministic with some possibility of randomness. For example, if the algorithm pairs units based on closeness of covariates and given two equally good matches for a unit, randomly selects one. We expect the results for these sort of random algorithms to follow the results here. However, other non-deterministic algorithms, such as algorithms that allow random switching of procedure, may take more work to assess.} with respect to $X$:
Then the expectation of performance over the simple random sampling would also capture the process of making the blocks based on $X$, as a function of the random set of $X$.
The overall difference between blocking and complete randomization will then, in essence, be the difference between how much variation we manage to keep across blocks (as represented by block means) and how much variation is left over within blocks, all averaged across the possible samples.
In other words, Equation~\ref{eq:cr_vs_bk_fin_samp} shows that a good blocking algorithm should reduce heterogeneity  in the $Y_i(z)$ within blocks and maximize variation of $\bar{Y}_k(z)$ across blocks.

To explore the possible harm of flexible blocking, we first consider a scenario where blocking would usually be considered a ``bad idea'': building blocks out of a covariate that is independent from the outcomes.

\begin{theorem}[Simple random sampling and flexible blocking with independent covariates]\label{thm:indep_srs}
If we have a fixed blocking algorithm and the covariates used to form blocks are independent of potential outcomes in the superpopulation, then
$\text{var}(\tauestcr|\model_{\text{SRS}})  = \text{var}(\tauestbk|\model_{\text{SRS}}) $.
\end{theorem}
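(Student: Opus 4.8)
The plan is to reduce everything to taking an expectation of the finite-sample difference in Equation~\ref{eq:cr_vs_bk_fin_samp} over the simple random sampling, and then to show this expectation vanishes once I condition on the block structure and invoke independence. First I would argue that the superpopulation difference equals the expected finite-sample difference. Using the variance decomposition $\text{var}(\widehat{\tau}|\model) = \EE[\text{var}(\widehat{\tau}|\mathcal{S})|\model] + \text{var}(\EE[\widehat{\tau}|\mathcal{S}]|\model)$, I note that both $\tauestcr$ and $\tauestbk$ are conditionally unbiased for the same finite-sample estimand $\taufs$ (the blocked estimator because $\EE[\tauestbk|\mathcal{S}] = \sum_k (\nk/n)\taukfs = \taufs$). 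Hence the second ``between-sample'' term is identical for the two designs and cancels, giving
\[\text{var}(\tauestcr|\model_{\text{SRS}}) - \text{var}(\tauestbk|\model_{\text{SRS}}) = \EE\left[\text{var}(\tauestcr|\mathcal{S}) - \text{var}(\tauestbk|\mathcal{S}) \mid \model_{\text{SRS}}\right],\]
so it suffices to show the expectation of the bracketed expression in Equation~\ref{eq:cr_vs_bk_fin_samp} is zero.

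Next I would condition on the realized block structure $\mathcal{B}$ (the block memberships and sizes, which the fixed blocking algorithm determines from the sampled covariate values alone; the footnote's extra conditioning handles algorithms with random sizes). This is where the hypothesis does its work: since the covariates are independent of the potential outcomes, conditional on $\mathcal{B}$ the potential outcomes of the units in any block are simply i.i.d.\ draws from the superpopulation, with no dependence on block membership. I would then compute the conditional expectation of each of the two terms in the bracket. For the within-block term, unbiasedness of sample variances gives $\EE[\Stk|\mathcal{B}] = \sigmat$, $\EE[\Sck|\mathcal{B}] = \sigmac$, and $\EE[\Stck|\mathcal{B}] = \sigmatc$, so that with $\ntk = p\nk$ and $\nck = (1-p)\nk$ the quantity $\EE[\sum_k (\nk/n)((n-\nk)/n)\,\text{var}(\taukest|\mathcal{S})\mid \mathcal{B}]$ reduces to $\sum_k \frac{n-\nk}{n^2}\big[\tfrac{\sigmat}{p} + \tfrac{\sigmac}{1-p} - \sigmatc\big]$.

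For the between-block term, I would set $W_k = \sqrt{p/(1-p)}\,\meanpock + \sqrt{(1-p)/p}\,\meanpotk$, observe that conditional on $\mathcal{B}$ the $W_k$ share a common mean and are mutually independent (they are built from disjoint sets of i.i.d.\ units), and apply the standard identity $\EE[\text{Var}_k(W_k)\mid\mathcal{B}] = \sum_k a_k(1-a_k)\,\text{Var}(W_k)$ with $a_k = \nk/n$. Since the cross-term coefficient satisfies $\sqrt{p/(1-p)}\,\sqrt{(1-p)/p}=1$, one gets $\text{Var}(W_k) = \frac{1}{\nk}\big[\tfrac{p}{1-p}\sigmac + \tfrac{1-p}{p}\sigmat + 2\,\text{Cov}(Y(t),Y(c))\big]$, and this term collapses to $\sum_k \frac{n-\nk}{n^2}\big[\tfrac{p}{1-p}\sigmac + \tfrac{1-p}{p}\sigmat + 2\,\text{Cov}(Y(t),Y(c))\big]$. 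Finally I would check the two reduced expressions coincide: substituting $\sigmatc = \sigmat + \sigmac - 2\,\text{Cov}(Y(t),Y(c))$ into the within-block bracket turns $\tfrac{\sigmat}{p} + \tfrac{\sigmac}{1-p} - \sigmatc$ into exactly $\tfrac{1-p}{p}\sigmat + \tfrac{p}{1-p}\sigmac + 2\,\text{Cov}(Y(t),Y(c))$, matching the between-block bracket verbatim.

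The two terms in Equation~\ref{eq:cr_vs_bk_fin_samp} therefore have equal conditional expectation for every $\mathcal{B}$, so the bracket has conditional expectation zero, and averaging over $\mathcal{B}$ (and hence over $\model_{\text{SRS}}$) gives the claim. I expect the main obstacle to be the conceptual step of justifying that conditioning on $\mathcal{B}$ renders the potential outcomes i.i.d.\ within blocks---this is the precise place the independence hypothesis enters---together with the bookkeeping for the between-block term, namely establishing cross-block independence of the $W_k$ and the $a_k(1-a_k)$ weighting. The closing algebraic identity is routine once those pieces are in place.
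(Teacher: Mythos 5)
Your proof is correct, but it takes a different computational route than the paper's. Both arguments share the same skeleton: the variance decomposition with cancellation of the $\text{var}(\EE[\widehat{\tau}|\mathcal{S}]|\model_{\text{SRS}})$ terms (both equal $\text{var}(\taufs|\model_{\text{SRS}})$), conditioning on the block structure induced by the covariates, and the observation that independence makes within-block sample moments unbiased for the corresponding superpopulation moments. Where you diverge is in what you take expectations of. The paper never touches Equation~\ref{eq:cr_vs_bk_fin_samp} in its proof: it computes $\EE[\text{var}(\tauestcr|\mathcal{S})|\model_{\text{SRS}}]$ and $\EE[\text{var}(\tauestbk|\mathcal{S})|\model_{\text{SRS}}]$ separately from the raw Neyman formulas and finds both equal $\sigmac/n_c + \sigmat/n_t - \sigmatc/n$; for the blocked design the sum collapses simply because $\sum_k \nk/n = 1$ once each $\Szk$ and $\Stck$ is replaced by $\sigmaz$ and $\sigmatc$. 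You instead pass through the between/within decomposition of the difference, and must therefore evaluate $\EE\left[\text{Var}_k(W_k)\,|\,\mathcal{B}\right]$, which requires two ingredients the paper never needs: cross-block independence of the block means $W_k$ with a common mean, and the identity $\EE\left[\text{Var}_k(W_k)\,|\,\mathcal{B}\right] = \sum_k a_k(1-a_k)\text{Var}(W_k|\mathcal{B})$ for the weighted empirical variance; you then close with the polarization-type identity $\sigmatc = \sigmat + \sigmac - 2\,\text{Cov}(Y(t),Y(c))$ to match the within-block term. (All of these steps check out, including the cancellation of the $\sqrt{p/(1-p)}\cdot\sqrt{(1-p)/p}$ cross-term coefficient and the reduction of both terms to a common multiple of $\sum_k (n-\nk)/n^2$.) The paper's route is shorter and less error-prone; yours buys an explicit verification of the claim the main text makes only informally---that ``the small amount of random separation of the block means perfectly offsets the blocking cost''---by showing the two terms of Equation~\ref{eq:cr_vs_bk_fin_samp} have identical conditional expectations for every realized block structure, not merely on average.
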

See~\ref{append:cr_vs_bk_srs} of the Supplementary Materials for proof.
The core idea is that blocking on an $X$ independent of outcome is the same as forming random blocks, which is equivalent to no blocking, i.e. complete randomization. %, i.e. blocking on nothing at all, and blocking on nothing at all (for samples obtained under simple random sampling) has no cost or benefit.
In these cases the small amount of random separation of the block means perfectly offsets the blocking cost (these are the two terms in Equation~\ref{eq:cr_vs_bk_fin_samp}).

Interestingly, however, it is possible to do worse when the independence of Theorem~\ref{thm:indep_srs} does not hold: one way to do this is to group units so their group means of $X$ tend to be similar while leaving the within group variances of $X$ high.
If $X$ is highly predictive of $Y$, Equation~\ref{eq:cr_vs_bk_fin_samp} suggests that doing this grouping based on $X$ will result in the same grouping structure on $Y$, producing high within group variability and low between group variability in outcomes.
If we can do this type of grouping well enough for most samples we draw, we would get overall inferior performance under this blocked design.
This is in fact possible, as we illustrate in Section~\ref{sec:flex_block_sim}.
Blocking in this manner would happen due to, for example, the occasional misconception that blocks, rather than the units within the blocks, should be made as similar to each other as possible.

In principle, systematic blocking worsening precision could also happen inadvertently: even if we are systematically reducing variation of $X$ within blocks, if the systematic relationship of $X$ and $Y$ happens to be exactly, perversely, wrong, we can still fail.
We illustrate this as well in the latter part of our simulation Section~\ref{sec:flex_block_sim}, where we designed a relationship where the outcome depended on whether the covariate $X$ was even or odd, and blocked in a manner to ensure balance on this across blocks.
This blocking mechanism substantially reduces within-block variance in our covariate $X$, suggesting gains, but still results in poor performance.
Such inadvertent causing of harm seems unlikely to happen in practice.

These two simulations show that, without further assumptions on the structural relationship between $X$ and the outcomes, we cannot guarantee that blocking is not harmful.
That being said, as our examples suggest, if a blocking procedure tends to group like with like, in terms of $X$, then it is hard to imagine a case that could be worse than blocking on something irrelevant.
That is, it is hard to imagine a case when blocking would cause any harm in this setting.
In our view, the ``Independence'' case used in Theorem~\ref{thm:indep_srs} is a reasonable worst-case for blocking approaches that are not explicitly blocking to minimize cross-block heterogeneity\footnote{\cite{savje2015performance} came to similar conclusions in an investigation of a specific form of flexible blocking called threshold blocking \citep[see][]{higgins2016improving}.
In particular, \cite{savje2015performance} discussed how the blocking may help or harm, depending on the true relationship between the covariates used to block and the outcome and provided a useful decomposition of the true variance in terms of aspects of the blocking algorithm.
However, threshold blocking allows for unequal proportions of units treated within each block, which causes additional complications we consider in Section~\ref{sec:cr_vs_bk_unequal}.
\cite{savje2015performance} showed no harm of another form of flexible blocking which does have equal proportions treated across blocks, fixed-size blocking (which here would be forming matched pairs), in a specific setting with a single binary covariate and blocking in a sensible way to match units with the same covariate values together, to the extent possible.}.

We next turn to simple random sampling for the other types of blocks.
We first note that structural blocks cannot, by design, at all be used in a simple random sampling setting; e.g., we cannot randomly sample $n$ individuals who are twins from an infinite population of twins and expect to find  any complete pairs of twins in our final sample.
Block type is essential in determining which superpopulation framework makes sense for a given experiment, and therefore what guarantees the researcher can realistically expect.

Simple random sampling with fixed blocks is similarly ill defined: what happens if a singleton, or no, units from a given block are sampled due to random chance?
We could extend this case to flexible blocks if we have simple rules for how these partially sampled blocks are combined with the others; if the fixed blocks are few in number, relative to the overall sample size, this adjustment is likely to be only a small number of units, implying that we would obtain substantively similar results.
That being said, we consider the fixed blocks case more carefully in the next section.

\subsection{\ldots with stratified sampling and fixed blocks}\label{subsec:bk_vs_cr_infstrat}

In stratified sampling there is a superpopulation that contains fixed strata, and we sample a fixed number of units from each stratum, and then randomize the units within each stratum independently.
This is the typical type of framework associated with a predefined categorical covariate used for blocking.
For example, a medical trial may recruit a set number of individuals in groups defined by preset age ranges and gender categories.
We denote this population framework by conditioning on $\model_{\text{strat}}$.

Similar to complete randomization under simple random sampling, the variance of the blocking design simplifies to the following result in this setting \citep[see][]{imbens2011experimental}:
\[\text{var}\left(\tauestbk|\model_{\text{strat}}\right)= \sum_{k=1}^{K}\frac{\nk^2}{n^2}\Big(\frac{\sigmatk}{\ntk} + \frac{\sigmack}{\nck}\Big),\]
where $\sigmazk$ is the population variance of potential outcomes under treatment $z \in \{t,c\}$ for units within stratum $k$.

In this context, it is not possible for blocking to be harmful (assuming equal proportion treated in all blocks):

\begin{theorem}[Variance comparison under stratified sampling]\label{prop:cr_vs_bk_strat_samp}
The difference in variance between complete randomization and blocked randomization under the stratified sampling framework is
\begin{align*}
\text{var}&\left(\tauestcr|\model_{\text{strat}}\right) - \text{var}\left(\tauestbk|\model_{\text{strat}}\right)=\frac{1}{n-1}\text{Var}_k\left(\sqrt{\frac{p}{1-p}}\popmeanpock + \sqrt{\frac{1-p}{p}}\popmeanpotk \right)\geq 0 
\end{align*}
where $\popmeanpozk$ is the population mean of potential outcomes under treatment $z$ in stratum $k$ and $\text{Var}_k(\cdot)$ is defined as in Equation~(\ref{eq:var_k_def}).
\end{theorem}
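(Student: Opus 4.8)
The plan is to reduce the superpopulation comparison to an expectation of the finite-sample difference in Equation~\ref{eq:cr_vs_bk_fin_samp}, and then evaluate that expectation term by term. First I would apply the variance decomposition $\text{var}(\widehat{\tau}|\model) = \EE[\text{var}(\widehat{\tau}|\mathcal{S})|\model] + \text{var}(\EE[\widehat{\tau}|\mathcal{S}]|\model)$ to each of $\tauestcr$ and $\tauestbk$ under $\model_{\text{strat}}$. The key observation is that, conditional on any sample $\mathcal{S}$, both estimators are unbiased for the same quantity: $\EE[\tauestcr|\mathcal{S}] = \taufs$ and $\EE[\tauestbk|\mathcal{S}] = \sum_{k=1}^K \frac{\nk}{n}\taukfs = \taufs$, where I use that under stratified sampling the block sizes $\nk$ are fixed. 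Hence the two ``variance of the conditional mean'' terms both equal $\text{var}(\taufs|\model_{\text{strat}})$ and cancel in the difference, leaving $\text{var}(\tauestcr|\model_{\text{strat}}) - \text{var}(\tauestbk|\model_{\text{strat}}) = \EE\big[\text{var}(\tauestcr|\mathcal{S}) - \text{var}(\tauestbk|\mathcal{S}) \,\big|\, \model_{\text{strat}}\big]$, into which I can substitute Equation~\ref{eq:cr_vs_bk_fin_samp}.

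It then remains to take the stratified-sampling expectation of the two bracketed terms in Equation~\ref{eq:cr_vs_bk_fin_samp}. Writing $X_k = \sqrt{\frac{p}{1-p}}\meanpock + \sqrt{\frac{1-p}{p}}\meanpotk$ with population analogue $\mu_k = \sqrt{\frac{p}{1-p}}\popmeanpock + \sqrt{\frac{1-p}{p}}\popmeanpotk$, I would expand the between-block term as $\text{Var}_k(X_k) = \sum_k \frac{\nk}{n}X_k^2 - (\sum_k \frac{\nk}{n}X_k)^2$ and take expectations. Because strata are sampled independently, the $X_k$ are independent across $k$, so the cross terms drop and I obtain $\EE[\text{Var}_k(X_k)] = \text{Var}_k(\mu_k) + \sum_{k=1}^K \frac{\nk}{n}\frac{n-\nk}{n}\text{var}(X_k)$; that is, the target quantity $\text{Var}_k(\mu_k)$ plus a correction built from the sampling variances $\text{var}(X_k)$.

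Finally, I would show this correction is exactly cancelled by the expectation of the within-block term. Since each block is an i.i.d. sample of $\nk$ units from its stratum, the sample variances and the sample covariance of treatment effects are unbiased, so $\EE[\text{var}(\taukest|\mathcal{S})] = \frac{\sigmatk}{\ntk} + \frac{\sigmack}{\nck} - \frac{\sigmatck}{\nk}$; using $\ntk = p\nk$, $\nck = (1-p)\nk$, and the identity $\sigmatck = \sigmatk + \sigmack - 2\sigma_k(ct)$ for the within-stratum covariance $\sigma_k(ct)$, this collapses to $\frac{1}{\nk}\big[\frac{1-p}{p}\sigmatk + \frac{p}{1-p}\sigmack + 2\sigma_k(ct)\big]$, which is precisely $\text{var}(X_k)$ as computed from $\text{var}(\meanpozk)=\sigmazk/\nk$ and that same covariance. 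The weights then match term for term, giving $\EE\big[\sum_k \frac{\nk}{n}\frac{n-\nk}{n}\text{var}(\taukest|\mathcal{S})\big] = \sum_k \frac{\nk}{n}\frac{n-\nk}{n}\text{var}(X_k)$, which annihilates the correction and leaves $\frac{1}{n-1}\text{Var}_k(\mu_k) \geq 0$, nonnegative because $\text{Var}_k(\cdot)$ is a weighted sum of squares. I expect the main obstacle to be the bookkeeping in this last step: correctly translating between the sample covariance $\Stck$ and the within-stratum covariance $\sigma_k(ct)$, and verifying the identity $\EE[\text{var}(\taukest|\mathcal{S})] = \text{var}(X_k)$ so that the weighted within-block expectation exactly matches the correction produced by the between-block term.
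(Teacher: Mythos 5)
Your proposal is correct and takes essentially the same route as the paper: both apply the variance decomposition under $\model_{\text{strat}}$, cancel the variance-of-conditional-mean terms because both estimators are conditionally unbiased for $\taufs$ (block sizes being fixed under stratified sampling), and then take the stratified-sampling expectation of the finite-sample difference, where the sampling variances of the block means cancel exactly against the expected within-block term. Your packaging of that cancellation as the identity $\EE[\text{var}(\taukest|\mathcal{S})|\model_{\text{strat}}] = \text{var}(X_k)$ is a cleaner bookkeeping device than the paper's term-by-term expansion of $\EE[(\meanpozk-\meanpoz)^2|\model_{\text{strat}}]$, but the underlying computation is the same.
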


The above expression is very similar to the positive term in Equation~(\ref{eq:cr_vs_bk_fin_samp}) for the finite sample framework. 
Now, however, we no longer have the negative term.

\begin{remark}
Interestingly, when comparing blocking to complete randomization in an infinite population setting, researchers have typically evaluated the completely randomized design under the simpler sampling mechanism of simple random sampling and analyzed the blocked design under the stratified sampling framework \citep[e.g., see][]{imai_king_stuart_2008, imbens2011experimental}.
The found difference between the two estimators is therefore an agglomeration of differences in the characteristics of the samples under the two different sampling regimes as well as the difference in doing a blocked experiment versus a completely randomized experiment.
Specifically, the difference is
\begin{align*}
\text{var}(\tauestcr|\model_{\text{SRS}}) - \text{var}(\tauestbk|\model_{\text{strat}}) \nonumber&= \frac{1}{n_{c}}\sum_{k=1}^{K}\frac{n_k}{n}\left(\popmeanpock-\popmeanpoc\right)^2 + \frac{1}{n_{t}}\sum_{k=1}^{K}\frac{n_k}{n}\left(\popmeanpotk-\popmeanpot\right)^2.% \label{eq:cr_vs_bk_srs_fin_samp}\\
\end{align*}
We see that this difference is also positive, which is where the results claiming that blocking does not cause harm under a superpopulation setting typically comes from.
However, in general, the difference between $\text{var}(\tauestcr|\model_{\text{SRS}}) - \text{var}\left(\tauestcr|\model_{\text{strat}}\right)$ may be positive or negative, showing that the traditional estimates of the benefits of blocking can either be under or overstated in this context.
The result in Theorem~\ref{prop:cr_vs_bk_strat_samp} compares blocking to stratified sampling followed by complete randomization, \emph{not} simple random sampling.
For further discussion, see Supplementary Material~\ref{append:cr_srs_vs_strat_samp}.
\end{remark}

\subsection{\ldots with random sampling of structural blocks}\label{subsec:bk_vs_cr_finstrat}

In this context we sample complete blocks, and then randomize the individuals within the blocks into treatment and control.
Here the blocks are \emph{structural}: they are a consequence of the world, not the researcher's choices.
Examples include twin studies, or a multisite trial with random sampling of sites where each site is then a small randomized trial.
These multisite trials are common in education and medical research where sites may be schools or hospitals.
For instance, with schools we may select schools from an ``infinite'' number of schools, and then randomize treatment to classrooms within each school.
We denote this population framework by conditioning on $\model_{\text{site}}$.

Here the difference between the variances is again the expectation over Equation~\ref{eq:cr_vs_bk_fin_samp} with respect to sampling of blocks:
\begin{align}
&\text{var}\left(\tauestcr|\model_{\text{site}}\right) - \text{var}\left(\tauestbk|\model_{\text{site}}\right) \label{eq:samp_strata} \\
&=E\left[\frac{1}{n-1}\left[ \text{Var}_k\left(\sqrt{\frac{p}{1-p}}\meanpock + \sqrt{\frac{1-p}{p}}\meanpotk\right) - \sum_{k=1}^K\frac{n_k}{n}\frac{n-\nk}{n}\text{var}(\taukest|\mathcal{S})\right]\Bigg|\model_{\text{site}}\right]  \nonumber
\end{align}
As the blocks themselves are sampled with block membership fixed, the expectation can be thought of as over all blocks in the population.
As in the finite framework, because the strata themselves are finite, it is possible that blocking could result in higher variance.
In particular, it is possible to have systematically poor blocks if the block means do not vary.
For example, if we use elementary school classrooms as blocks, we may find that schools break up students into classes such that the classrooms all look similar to each other, in that they have similar proportions of high and low achieving students, but by the same token have higher within classroom variability.

This framework is typically adopted under comparisons of matched pairs (a blocked experiment with blocks of size two) and complete randomization.
Therefore researchers studying matched pairs designs often note that the design can hurt in the superpopulation setting \citep[e.g.,][]{Imai2008}.
However, we make clear here that this harm is not due to the size of the blocks, but rather the block types and sampling scheme.

\subsection{\ldots with two stage sampling}\label{subsec:bk_vs_cr_two_stage}
We next extend our prior setting by letting the sampled strata be themselves infinite in size.
The two stage sampling scheme then works as follows: first randomly select $K$ blocks, then randomly select $n_k$ units within the $k$th selected block.
We generally allow the $n_k$ to depend on the block selected, such that we may not know the total sample size beforehand.
For instance, we might imagine first selecting schools from an ``infinite'' population of schools, and then selecting students from an ``infinite'' population of students within each school.
This sampling scheme is popular in education research and has close ties to multilevel modeling and how multisite experiments are evaluated.
We denote this model with $\model_{\text{2-stage}}$.

If we condition on which blocks were chosen in the first stage, we can draw on results from  stratified sampling (Section~\ref{subsec:bk_vs_cr_infstrat}), giving our variance difference of
\begin{align*}
\text{var}&\left(\tauestcr|\model_{\text{2-stage}}\right) - \text{var}\left(\tauestbk|\model_{\text{2-stage}}\right)\\ 
 &=\EE\left[\frac{1}{n-1}\text{Var}_k\left(\sqrt{\frac{p}{1-p}}\popmeanpock + \sqrt{\frac{1-p}{p}}\popmeanpotk \right)|\model_{\text{2-stage}}\right]\\
&\geq 0 . 
\end{align*}

The sampling of units within strata, as compared to taking the entire structural block, guarantees that blocking will not be harmful with respect to variance, as compared to complete randomization.
Of course, in order to obtain this guarantee, the sampling mechanism and population model need to be valid for the study at hand.
That is, we must have large strata with two stage sampling to rely on this result.
For studies where the entire strata are included in the experiment, the setting of Section~\ref{subsec:bk_vs_cr_finstrat} is the one that should be assumed.

\begin{table}[h!]
    \centering
    \begin{tabular}{|c|c|c|c|c|}
    \hline
    & \multicolumn{2}{|c|}{Equal proportions treated} & \multicolumn{2}{|c|}{Unequal proportions treated}\\
    \hline
        \textbf{Framework} & \begin{tabular}{c}\textbf{Blocking} \\
        \textbf{can help?}\end{tabular} & \begin{tabular}{c}\textbf{Blocking } \\
        \textbf{can hurt?}\end{tabular} & \begin{tabular}{c}\textbf{Blocking} \\
        \textbf{can help?}\end{tabular}  & \begin{tabular}{c}\textbf{Blocking} \\
        \textbf{can hurt?}\end{tabular}\\
        \hline
        \hline
        Finite sample & \color{ForestGreen} Yes & \color{Maroon}Yes & \color{ForestGreen} Yes & \color{Maroon}Yes  \\
        \hline
        \begin{tabular}{c}Simple random sampling,\\ flexible blocks \end{tabular}& \color{ForestGreen} Yes & \color{black}  Yes$^*$ & \color{ForestGreen} Yes & \color{Maroon}Yes \\
        \hline
        \begin{tabular}{c}Stratified sampling,\\  fixed blocks \end{tabular}& \color{ForestGreen} Yes & \color{ForestGreen} No& \color{ForestGreen} Yes & \color{Maroon}Yes \\
        \hline
        \begin{tabular}{c}Random sampling of strata,\\ structural blocks\end{tabular} & \color{ForestGreen} Yes & \color{Maroon} Yes& \color{ForestGreen} Yes & \color{Maroon}Yes  \\
        \hline
        \begin{tabular}{c}Two stage sampling,\\ structural blocks \end{tabular}& \color{ForestGreen} Yes & \color{ForestGreen} No& \color{ForestGreen} Yes & \color{Maroon}Yes \\
        \hline
    \end{tabular}\caption{When blocking is guaranteed not to hurt in terms of precision. Results for equal proportions treated appear in Section~\ref{sec:bk_vs_cr} and results for unequal proportions treated appear in Section~\ref{sec:cr_vs_bk_unequal}. $^*$ No harm for making blocks out of irrelevant covariates. Harm is possible if deliberately implementing a self-destructive blocking choice, as discussed in the text.}\label{tab:sum}
\end{table}

\section{The consequences of variable proportion of blocks treated}
\label{sec:cr_vs_bk_unequal}

Our comparison of complete randomization to blocking in the prior section only applies to the small slice of possible experiments in which the treatment proportion is equal across all blocks.
In practice, however, the proportions treated, $p_k$, may be unequal, and in this case the above results are not guaranteed to hold.
In particular, with blocks of variable size, it can be difficult to have the same proportion treated within each block due to the discrete nature of units.

With varying $p_k$, the units within each block are weighted differently than they would be in a complete randomization when calculating a treatment effect estimate.
That is, in a complete randomization, the treated units are all weighted proportional to $1/p$ but here the treated units in each block get weighted instead by $1/p_k$, meaning units with low probability of treatment will ``count more'' towards the overall treatment mean and their variability will have greater relevance for the overall variance of the estimator.
This can be seen easily from the estimator formulation:
\begin{align*}
\tauestbk &=  \sum_{k=1}^K\frac{\nk}{n}\taukest\\
& =  \sum_{k=1}^K\frac{\nk}{n}\left(\frac{1}{p_k\nk}\sum_{i: b_{i}=k} \mathbb{I}_{Z_i = t} \pot  - \frac{1}{(1-p_k)\nk} \sum_{i: b_{i} =k} (1-\mathbb{I}_{Z_i = t}) \poc\right)\\
& =  \sum_{k=1}^K\left(\frac{1}{n_t}\frac{p}{p_k}\sum_{i: b_{i}=k} \mathbb{I}_{Z_i = t} \pot  - \frac{1}{n_c}\frac{1-p}{1-p_k} \sum_{i: b_{i} =k} (1-\mathbb{I}_{Z_i = t}) \poc\right).
\end{align*}
\cite{savje2015performance} also discussed the effect of variable proportions treated on variance and \cite{higgins2015blocking} explored estimators for blocked designs with possibly unequal treatment proportions, but also multiple treatments.
The costs here are similar to the costs of variable selection probabilities in survey sampling \citep[see][]{sarndal2003model}.

When different blocks have different proportions of units treated, it is possible to systematically have blocks and treatment groups with more variance to also have more weight, which could cause blocking to be harmful even in the stratified sampling setting of Section~\ref{subsec:bk_vs_cr_infstrat} where we usually have guarantees on the benefits of blocking.
In the face of unequal proportions treated, we have the following result for the stratified sampling context:  
\begin{theorem}[Variance comparison with unequal treatment proportions]\label{theorem:unequal_prop}
\begin{align*}
&\text{var}\left(\tauestcr|\model_{\text{strat}}\right)-\text{var}\left(\tauestbk|\model_{\text{strat}}\right) \nonumber\\
  &=\frac{1}{n-1}\text{Var}_k\left(\sqrt{\frac{p}{1-p}}\popmeanpock + \sqrt{\frac{1-p}{p}}\popmeanpotk \right)+\sum_{k=1}^K\frac{(p-p_k)n_k}{n^2}\left[\frac{\sigmack}{(1-p_k)(1-p)}-\frac{\sigmatk}{p_kp}\right] .
\end{align*}
\end{theorem}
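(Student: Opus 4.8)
The plan is to lean on Theorem~\ref{prop:cr_vs_bk_strat_samp}, which already settles the equal-proportions case, and reduce the unequal case to a bookkeeping adjustment. The key observation is that $\text{var}(\tauestcr|\model_{\text{strat}})$ does not depend on the block-specific proportions $p_k$ at all: complete randomization ignores the stratum structure entirely, assigning a fixed $n_t = np$ of the $n$ sampled units to treatment, so its variance is a function only of the overall $p$ together with the stratum sizes, means, and variances. Hence the completely randomized variance is identical whether the $p_k$ are equal to $p$ or not, and Theorem~\ref{prop:cr_vs_bk_strat_samp} can be reused verbatim for its left-hand first term.

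First I would record the blocked variance under unequal proportions. Specializing the simple-random-sampling result of the previous subsection to stratum $k$ (where, the stratum population being infinite, there is no finite-population correction) gives $\text{var}(\taukest|\model_{\text{strat}}) = \sigmatk/(p_k\nk) + \sigmack/((1-p_k)\nk)$, and aggregating across the independent strata with the block-size weights $\nk^2/n^2$ yields
\[
\text{var}(\tauestbk|\model_{\text{strat}}) = \sum_{k=1}^K \frac{\nk}{n^2}\left(\frac{\sigmatk}{p_k} + \frac{\sigmack}{1-p_k}\right).
\]
Write $\text{var}(\tauestbk^{(p)}|\model_{\text{strat}})$ for this same expression with every $p_k$ replaced by $p$; this is exactly the blocked variance entering Theorem~\ref{prop:cr_vs_bk_strat_samp}.

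Next I would split the target difference as
\[
\text{var}(\tauestcr|\model_{\text{strat}}) - \text{var}(\tauestbk|\model_{\text{strat}}) = \Big[\text{var}(\tauestcr|\model_{\text{strat}}) - \text{var}(\tauestbk^{(p)}|\model_{\text{strat}})\Big] + \Big[\text{var}(\tauestbk^{(p)}|\model_{\text{strat}}) - \text{var}(\tauestbk|\model_{\text{strat}})\Big].
\]
By the invariance observation, the first bracket is precisely the quantity evaluated in Theorem~\ref{prop:cr_vs_bk_strat_samp}, namely $\frac{1}{n-1}\text{Var}_k(\sqrt{p/(1-p)}\,\popmeanpock + \sqrt{(1-p)/p}\,\popmeanpotk)$. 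The second bracket is a difference of the two block sums above, collapsing to $\sum_{k=1}^K \frac{\nk}{n^2}\big[\sigmatk(1/p - 1/p_k) + \sigmack(1/(1-p) - 1/(1-p_k))\big]$.

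The remaining step is purely algebraic: using $1/p - 1/p_k = -(p-p_k)/(p\,p_k)$ and $1/(1-p) - 1/(1-p_k) = (p-p_k)/((1-p)(1-p_k))$, factor out $(p-p_k)$ to recover the stated second term $\sum_{k=1}^K \frac{(p-p_k)\nk}{n^2}\big[\sigmack/((1-p_k)(1-p)) - \sigmatk/(p_k p)\big]$. I expect this reciprocal-combining to be the only genuine computation; the conceptual crux—and the step most worth stating explicitly—is the claim that $\text{var}(\tauestcr|\model_{\text{strat}})$ is invariant to the $p_k$, since that is what permits reusing Theorem~\ref{prop:cr_vs_bk_strat_samp} without recomputing the completely randomized variance from scratch.
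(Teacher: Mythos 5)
Your proof is correct, but it takes a genuinely different route from the paper's. The paper works from the ground up: it expands the finite-sample variance $\text{var}(\tauestcr|\mathcal{S})$ into block components, subtracts the finite-sample blocked variance with unequal $p_k$, and then takes expectations under $\model_{\text{strat}}$ using the same machinery as in the proof of Theorem~\ref{prop:cr_vs_bk_strat_samp}, combining like terms at the end. You instead avoid the finite-sample expansion entirely: you note that $\text{var}(\tauestcr|\model_{\text{strat}})$ never references the $p_k$, insert the equal-proportion blocked variance $\text{var}(\tauestbk^{(p)}|\model_{\text{strat}}) = \sum_k \frac{\nk}{n^2}\bigl(\frac{\sigmatk}{p} + \frac{\sigmack}{1-p}\bigr)$ as a pivot, identify the first bracket with Theorem~\ref{prop:cr_vs_bk_strat_samp}, and reduce the second bracket to the reciprocal identities $\frac{1}{p}-\frac{1}{p_k} = -\frac{p-p_k}{pp_k}$ and $\frac{1}{1-p}-\frac{1}{1-p_k} = \frac{p-p_k}{(1-p)(1-p_k)}$, which I have checked do recover the stated second term. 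Your approach is shorter and more interpretable: it exhibits the second term explicitly as the gap between the unequal-proportion and equal-proportion blocked designs, i.e., the pure cost (or benefit) of differential weighting, which the paper's term-combining obscures. One small caveat you should state: when $p\nk$ is not an integer for some $k$ (precisely the practical situation motivating the theorem), the pivot "design" with $p_k = p$ is not realizable, so you are invoking Theorem~\ref{prop:cr_vs_bk_strat_samp} not as a statement about two implementable experiments but as the algebraic identity its proof establishes, which holds for any $p \in (0,1)$ and any stratum structure; this is easily noted and does not damage the argument. You also implicitly need that the closed form $\sum_k \frac{\nk}{n^2}\bigl(\frac{\sigmatk}{p_k} + \frac{\sigmack}{1-p_k}\bigr)$ holds with unequal $p_k$, which follows exactly as in Section~\ref{subsec:bk_vs_cr_infstrat} since the blocks remain independent SRS-plus-complete-randomization experiments; spelling that one line out would make the proof airtight.
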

The first term in the above result is exactly the usual difference in the stratified sampling setting with equal proportions, as shown in Theorem~\ref{prop:cr_vs_bk_strat_samp}, and is always non-negative.
The second term, however, can be positive or negative depending upon the proportion treated within each block and the variability of potential outcomes under treatment and control within each block.
So we see that with unequal proportions treated across blocks, we do not have a setting with simple guarantees on the benefits of blocking.

Consider the following simplification when we have constant additive treatment effects within each block such that $\sigmack = \sigmatk = \sigma^2_k$:
\begin{align*}
&\text{var}\left(\tauestcr|\model_{\text{strat}}\right)-\text{var}\left(\tauestbk|\model_{\text{strat}}\right) \nonumber\\
  &=\frac{1}{n-1}\text{Var}_k\left(\sqrt{\frac{p}{1-p}}\popmeanpock + \sqrt{\frac{1-p}{p}}\popmeanpotk \right)+\sum_{k=1}^K\frac{n_k}{n^2p(	1-p)}\left[\frac{(p-p_k)(p-(1-p_k))}{(1-p_k)p_k}\right] \sigma^2_k.
\end{align*}
The second term can be negative if, for example, $1 - p_k < p < p_k$ or $p_k < p < 1-p_k$ for all the blocks.
This could occur, for instance, if $p = 0.5$ but in some blocks the researcher is unable to treat exactly half the units (e.g., because of odd number block sizes).
In this circumstance, if our blocks are such that $\popmeanpock = \popmeanpoc$ and $\popmeanpotk = \popmeanpot$ for all $k$, the whole expression will be negative.
Without the simplification that $\sigmack = \sigmatk = \sigma^2_k$, the effect of the varying proportion treated can be mitigated or exacerbated depending upon whether systematically more or fewer units are allocated to the more variable treatment condition within the blocks.

In our simulations in Section~\ref{sec:sims} we in fact see degradation in the benefits of blocking on weakly predictive covariates when the proportion treated is only approximately equal, rather than precisely equal, for the finite sample.
This comes from the additional variation induced by different units being weighted differently, and is akin to the increase in variance from weighted survey sampling.
This suggests that in many realistic scenarios, one might not want to block on covariates that are only weakly predictive.
Section~\ref{sec:sims} provides simulations and discussion to illustrate how blocking can be harmful, or helpful, in this case.

\section{Misconceptions on the comparison of blocking and complete randomization}\label{sec:ignore_bk}

In this section we explore two misconceptions we have encountered regarding comparisons of blocking and complete randomization. 
First, there is a common belief that one can simply ignore blocking that was done and analyze a blocked experiment as a completely randomized one without consequence.
This belief is a misconception in some contexts: implementing a blocked design and then ignoring the blocking when calculating the variance estimator will not necessarily be conservative for the variance of $\tauestbk$.
Secondly, there is a belief that the completely randomized variance estimator (under a completely randomized design) is guaranteed to have lower variability than the typical blocking variance estimator (under a blocked design).
This also does not hold in some contexts.

Before exploring these misconceptions in the next two sections, we first need to introduce the standard variance estimators.
For a completely randomized design, the standard variance estimator is
\begin{align*}
\varestcr  = \widehat{\text{var}}(\tauestcr) =\frac{\scest}{n_{c}} + \frac{\stest}{n_{t}},
\end{align*}
where $\szest$ is the estimated sampling variance for units assigned to treatment $ z \in \{t,c\}$.
Under complete randomization, this estimator is conservative for the finite sample, with bias $\Stc/n$, and is unbiased under simple random sampling \citep[see, e.g.,][]{CausalInferenceText}.

Variance estimation for the blocked design is a bit more complicated.
However, the usual variance estimation strategy for blocked experiments with at least two treated and two control units in each block is to estimate the variance within each block separately, as in the completely randomized design, with
\[\varestk  = \widehat{\text{var}}(\taukest) =\frac{\sckest}{\nck} + \frac{\stkest}{\ntk}.\]
Here the $\szkest$ are analogous to $\szest$ within each block $k$.
These variance estimators can then be combined into an overall variance estimator of
\[\varestbkone  = \widehat{\text{var}}(\tauestbk) = \sum_{k=1}^{K} \frac{\nk^{2}}{n^{2}}\left(\frac{\sckest}{\nck} + \frac{\stkest}{\ntk}\right).\]
Extending results for the completely randomized variance estimator, it is well known \citep[see, e.g., ][]{imbens2011experimental} that $\varestbkone$ is conservative in the finite sample and unbiased under stratified random sampling.
See \cite{pashley_main} for a full discussion of variance estimation under the blocked design, including variance estimators for designs including blocks with a singleton treatment or control unit, and the resulting bias in estimation under different population frameworks.

\subsection{Misconception I: Completely randomized variance estimators are conservative for the blocked design}\label{sec:ignore_block}

Here we consider what happens if we ignore blocking done in the design stage when analyzing the data.
First, if we do not have $p_k=p$ for all $k$, then it is possible that ignoring the blocking structure can cause bias, with $\EE\left[\tauestcr| \mathcal{S}, \textbf{P}_{blk}\right] \neq \taufs$.
$\tauestcr$ could be biased, even under a constant treatment effect assumption, because the completely randomized estimator will effectively give higher or lower weight to some units than the blocked estimator. 
In this case, variance comparison is less relevant.

When $p_k=p$, although bias in the treatment effect estimator is no longer a concern, using the standard completely randomized variance estimator under a blocked design can cause issues.
Firstly, the correct analysis follows from the experimental design.
Therefore, if blocking was implemented it should be taken into account in the analysis.
If standard errors based on the completely randomized design are used instead, we are using standard errors that do not actually reflect uncertainty in the design we ran.
In other words, the standard errors for the completely randomized design are irrelevant for the blocked experiment we actually ran!

Now we turn to a more technical discussion of why researchers may believe the blocks can be ignored and why this belief is flawed.
In particular, the estimated standard errors for a blocked design, when there are many blocks and/or few units, can sometimes be unstable.
This instability is separate from the true performance of the blocked estimator; the instability is in estimating the uncertainty, not the uncertainty itself.
A researcher might think, therefore, to implement blocking to realize its gains, but then perform the analysis as a completely randomized experiment to avoid these concerns.
Unfortunately, this strategy is not guaranteed to be a good choice.

Regarding this first misconception, we have the following:
\begin{theorem}[Completely randomized variance estimator under blocking: Finite sample]\label{theorem:pretend_cr_finite}
In the finite sample setting, analyzing a blocked experiment as if it were completely randomized could give anti-conservative estimators for variance.
\end{theorem}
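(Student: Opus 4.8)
The plan is to reduce the claim to a single variance comparison and then certify it with an explicit configuration. First I would observe that whenever $p_k = p$ for all $k$, the blocked and completely randomized point estimators are numerically identical: substituting $\ntk = p\nk$ and $\nck = (1-p)\nk$ into $\tauestbk = \sum_{k}(\nk/n)\taukest$ makes the block weights cancel and recovers $\tauestcr$ exactly. Hence ``analyzing the blocked experiment as if it were completely randomized'' leaves the point estimate untouched; it only replaces the correct variance $\text{var}(\tauestbk \mid \mathcal{S})$ by the reported quantity $\varestcr = \scest/n_{c} + \stest/n_{t}$. Being anti-conservative therefore means exactly $\EE[\varestcr \mid \mathcal{S}, \textbf{P}_{blk}] < \text{var}(\tauestbk \mid \mathcal{S})$, where the expectation is taken over the blocked assignment $\textbf{P}_{blk}$, and the theorem reduces to exhibiting potential outcomes for which this strict inequality holds.

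To evaluate the left-hand side I would compute $\EE[\stest \mid \textbf{P}_{blk}]$ and $\EE[\scest \mid \textbf{P}_{blk}]$. Under the blocked design the units assigned to a given arm form a stratified sample with proportional allocation, so I would split the pooled treatment-arm sum of squares into a within-block and a between-block component. The within-block component has expectation $\sum_{k}(\ntk - 1)\Stk$ by the standard finite-population sampling result, while the between-block component combines the deterministic dispersion of the block means $\meanpotk$ about $\meanpot$ with the sampling variability of the block-level sample means. Assembling these, dividing by $n_{t}-1$, and repeating symmetrically for the control arm produces a closed form for $\EE[\varestcr \mid \mathcal{S}, \textbf{P}_{blk}]$, which I would then compare term by term against $\text{var}(\tauestbk \mid \mathcal{S}) = \sum_{k}(\nk^{2}/n^{2})(\Stk/\ntk + \Sck/\nck - \Stck/\nk)$.

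For the existence claim it suffices to specialize to the worst-case configuration already flagged after Equation~(\ref{eq:cr_vs_bk_fin_samp}): $K$ equal-sized blocks with $\nk = n/K$, proportion $p = 1/2$, and no treatment effect, $\pot = \poc$ for every unit, with a common block mean and common within-block variance $\Stk = \Sck = S^{2}_{k}$ across blocks. Equal block means kill the deterministic between-block term and $\pot = \poc$ forces $\Stck = 0$, so both sides collapse to explicit multiples of $\sum_{k} S^{2}_{k}$: the true blocked variance equals $\tfrac{4}{nK}\sum_{k} S^{2}_{k}$, whereas the naive estimator satisfies $\EE[\varestcr \mid \mathcal{S}, \textbf{P}_{blk}] = \tfrac{4(n-1-K)}{nK(n-2)}\sum_{k} S^{2}_{k}$. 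Their ratio is $(n-1-K)/(n-2)$, which is strictly less than $1$ for every $K \ge 2$, so the naive estimator is strictly anti-conservative there. As a sanity check, $K = 1$ (genuine complete randomization) returns ratio $1$, recovering the familiar unbiasedness.

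The delicate step is the between-block component of $\EE[\stest \mid \textbf{P}_{blk}]$: the block sample means have heterogeneous expectations $\meanpotk$ and are correlated with the grand sample mean through their shared contribution, so the cross-covariances must be tracked carefully rather than treated as an i.i.d.\ ANOVA decomposition. The chosen counterexample is built precisely to neutralize this difficulty---equal block means annihilate the deterministic piece and leave only the clean within-block sampling terms---so the general formula, while it explains \emph{why} ignoring blocks can backfire, need not be pushed through in full generality to settle the existential statement.
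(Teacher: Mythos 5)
Your proposal is correct and follows essentially the same route as the paper's proof in Supplementary Material~\ref{append:use_cr_fin_samp}: compute $\EE\left[\szest \mid \mathcal{S}, \textbf{P}_{blk}\right]$ by splitting the pooled sum of squares into within-block and between-block pieces, then observe that when the block means coincide but within-block variances are positive (with $\pot = \poc$, hence $\Stck = 0$ and maximal positive correlation), the resulting bias is strictly negative. Your one addition is to instantiate the paper's qualitative worst case as an explicit equal-block, $p = 1/2$ configuration with closed forms $\tfrac{4}{nK}\sum_k S^2_k$ and $\tfrac{4(n-1-K)}{nK(n-2)}\sum_k S^2_k$ and ratio $(n-1-K)/(n-2) < 1$, which is consistent with the paper's general bias formula (whose bias there reduces to $-\tfrac{4(K-1)}{nK(n-2)}\sum_k S^2_k$) and makes the existential claim fully concrete.
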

That is, it is possible to have $\EE\left[\varestcr| \mathcal{S}, \textbf{P}_{blk}\right] \leq \text{var}\left(\tauestbk|\mathcal{S}, \textbf{P}_{blk}\right)$, where $\textbf{P}_{blk}$ is a blocked randomization assignment mechanism.
See Supplementary Material~\ref{append:use_cr_fin_samp} for a derivation that proves this result (assuming $p_k=p$ for all $k$ and with a positive correlation of potential outcomes).
Section~\ref{sim:misconceptions} illustrates this result with a simulation.

However, in the stratified sampling framework, ignoring blocking when a blocked design was run will always result in a conservative estimator for the variance of $\tauestbk$.
\begin{corollary}[Completely randomized variance estimator under blocking: Stratified sampling]\label{cor:pretend_cr_strat}
Analyzing a blocked experiment as if it were completely randomized will not give anti-conservative estimators for variance if we are analyzing for a superpopulation with fixed blocks and stratified random sampling.
\end{corollary}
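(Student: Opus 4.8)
The plan is to prove the conservativeness claim by direct computation of $\EE\!\left[\varestcr \mid \model_{\text{strat}}, \textbf{P}_{blk}\right]$ under the joint distribution induced by stratified sampling and the (equal-proportion) blocked assignment, and to show that it splits into the true blocked variance plus a manifestly nonnegative remainder. Since the completely randomized variance estimator ignores blocks, $\varestcr = \scest/n_c + \stest/n_t$ where $\scest,\stest$ are the \emph{pooled} sample variances of the observed control and treated outcomes. By linearity it therefore suffices to compute $\EE[\stest]$ and $\EE[\scest]$ over both sampling and assignment and then combine them with the weights $1/n_t$ and $1/n_c$.

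The structural observation that makes this tractable is that, under $\model_{\text{strat}}$, stratum $k$ contributes a fixed number $\nk$ of i.i.d. draws from its superpopulation, and the within-stratum randomization (independent of potential outcomes) labels $\ntk = p\nk$ of them as treated and $\nck = (1-p)\nk$ as control. Consequently the pooled collection of treated observed outcomes is a set of \emph{mutually independent} random variables in which the $\ntk$ coming from stratum $k$ each have mean $\popmeanpotk$ and variance $\sigmatk$; the control pool has the analogous description with $\popmeanpock$ and $\sigmack$. Establishing this independence-with-heterogeneous-moments structure is the one step requiring care, and it is where the infinite-stratum stratified model does the real work; the rest is bookkeeping.

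With that in hand I would invoke the elementary identity that, for independent variables $X_1,\dots,X_m$ with means $\mu_j$ and variances $v_j$, one has $\EE\!\left[\tfrac{1}{m-1}\sum_j (X_j-\bar X)^2\right] = \tfrac{1}{m}\sum_j v_j + \tfrac{1}{m-1}\sum_j (\mu_j - \bar\mu)^2$. Applied to the treated pool (so $m=n_t$) this gives
\[
\EE[\stest \mid \model_{\text{strat}}]
 = \frac{1}{n_t}\sum_{k=1}^K \ntk\,\sigmatk
  + \frac{1}{n_t-1}\sum_{k=1}^K \ntk\left(\popmeanpotk-\popmeanpot\right)^2,
\]
and symmetrically for $\EE[\scest\mid\model_{\text{strat}}]$ with $\nck$, $\sigmack$, $\popmeanpock$, $\popmeanpoc$, using that the stratum-weighted average of $\popmeanpotk$ equals $\popmeanpot$ (since $\ntk/n_t = \nk/n$).

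Finally I would substitute $\ntk = p\nk$ and $\nck=(1-p)\nk$ and form $\EE[\varestcr] = \EE[\scest]/n_c + \EE[\stest]/n_t$. The two ``within-stratum variance'' pieces collapse to $\sum_k \tfrac{\nk}{n^2}\big(\tfrac{\sigmatk}{p}+\tfrac{\sigmack}{1-p}\big) = \sum_k \tfrac{\nk^2}{n^2}\big(\tfrac{\sigmatk}{\ntk}+\tfrac{\sigmack}{\nck}\big)$, which is exactly $\text{var}(\tauestbk\mid\model_{\text{strat}})$ from Section~\ref{subsec:bk_vs_cr_infstrat}. What remains is
\[
\EE\!\left[\varestcr \mid \model_{\text{strat}}\right] - \text{var}\!\left(\tauestbk \mid \model_{\text{strat}}\right)
 = \frac{\sum_k \nk\left(\popmeanpotk-\popmeanpot\right)^2}{n(pn-1)}
 + \frac{\sum_k \nk\left(\popmeanpock-\popmeanpoc\right)^2}{n((1-p)n-1)} \;\geq\; 0,
\]
a sum of squares with positive coefficients, which establishes conservativeness. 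The contrast with Theorem~\ref{theorem:pretend_cr_finite} is then transparent: in the finite sample there is no sampling expectation to supply these nonnegative between-stratum mean-dispersion terms, so the guarantee can fail, whereas averaging over stratified sampling is precisely what restores it.
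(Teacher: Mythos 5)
Your proof is correct, and it reaches exactly the paper's bias formula
$\frac{1}{n_c-1}\sum_k\frac{\nk}{n}\left(\popmeanpock-\popmeanpoc\right)^2+\frac{1}{n_t-1}\sum_k\frac{\nk}{n}\left(\popmeanpotk-\popmeanpot\right)^2\geq 0$,
but by a genuinely different route. The paper proves the corollary as a true corollary of Theorem~\ref{theorem:pretend_cr_finite}: it starts from the finite-sample expectation $\EE\left[\szest\mid\mathcal{S},\textbf{P}_{blk}\right]$, expressed in terms of $\Szk$ and the sample block means, and then averages over stratified sampling via the law of total expectation, substituting $\EE\left[\Szk\mid\model_{\text{strat}}\right]=\sigmazk$ and the expectation of the squared deviations of block sample means; in that calculation one sees explicitly how the negative within-block terms responsible for the finite-sample anti-conservativeness are exactly cancelled by the sampling variability, leaving only the nonnegative between-stratum terms. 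You instead bypass the finite-sample stage entirely: you observe that under $\model_{\text{strat}}$ with equal-proportion blocked assignment (independent of potential outcomes, with $\ntk=p\nk$ fixed), the pooled treated outcomes are mutually independent, with $\ntk$ i.i.d. draws per stratum having mean $\popmeanpotk$ and variance $\sigmatk$ (and analogously for control), and then apply the standard expected-sample-variance identity for independent, non-identically distributed variables. Your one-step computation is shorter and more self-contained — it needs neither the finite-population algebra of Supplementary Material~\ref{append:use_cr_fin_samp} nor the auxiliary moments from~\ref{append:cr_vs_bk_strat_samp} — and the independence claim you flag as the delicate step is indeed valid (a random subsample chosen independently of an i.i.d. sample is itself i.i.d., and strata are independent). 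What it gives up is reusability: the paper's two-stage structure makes the contrast with the finite-sample failure mechanical rather than merely heuristic, since the same intermediate expression underlies both results, whereas your argument lives only in the superpopulation world and the connection to Theorem~\ref{theorem:pretend_cr_finite} remains a remark rather than a visible cancellation.
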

In the context of stratified random sampling, ignoring the blocking in variance estimation is ``safe.''
See Supplementary Material~\ref{append:use_cr_strat_samp} for more on this result (assuming $p_k=p$ for all $k$).

\subsection{Misconception II: Completely randomized variance estimators have lower variance}\label{sec:var_cr_vs_bk}
Comparisons of blocking to complete randomization often include a discussion on the performance of the variance estimators in terms of their own variance under each design.
There is a misconception that the variance of the blocking variance estimator will always be larger than that of the complete randomization variance estimator.
We next show that misconception is not necessarily true.
Assume that $n_{z,k} \geq 2$ for all blocks $k$ and all treatment assignments $z$.
We focus on the superpopulation framework with stratified random sampling.
The question is, do we have guarantees that one variance estimator will have lower variance?

If $\text{var}(\scest|\model_{\text{strat}}, \textbf{P}_{cr}) \leq \text{var}(\sckest|\model_{\text{strat}}, \textbf{P}_{blk})$ and $\text{var}(\stest|\model_{\text{strat}}, \textbf{P}_{cr}) \leq \text{var}(\stkest|\model_{\text{strat}}, \textbf{P}_{blk})$ for most $k = 1,..., K$, then the completely randomized variance estimator will have lower variability than the blocking variance estimator. 
\cite{imbens2011experimental} gives such an example.\footnote{We note, however, that we disagree with the generalization made in that paper about the variability of the variance estimators, as explained in this discussion.}
In \citeauthor{imbens2011experimental}'s example, the potential outcomes under control have no variation ($\sigmack = 0$ for all $k = 1,...K$) and the distribution of the potential outcomes under treatment is the same in all of the strata ($\sigmatk = \sigmat$ for all $k= 1,..,K$). 
Then, Imbens argues, because $\stest$ is a less noisy estimator of $\sigmat$ than any of the $\stkest$, the variance of the variance estimator would be smaller under the completely randomized design than under the blocked design.
The notion that the variance estimator for complete randomization is less noisy because it is using more data may be true in many situations.

However, this result does not hold in general. 
For instance, consider a population with four strata. 
Within each stratum, there is zero treatment effect and all units are identical. 
Between strata, however, the potential outcomes differ. 
For convenience, say in stratum one $\poc = \pot = 1$, in stratum two $\poc = \pot = 2$, in stratum three $\poc = \pot = 3$, and in stratum four $\poc = \pot = 4$. 
Now assume that four units are sampled from each of the strata. 
In a blocked design, our variance estimate would always be 0. 
But in a completely randomized design, the variance estimate would change based on which units were assigned to treatment and control. 
Thus, the blocking variance estimator would have 0 variance whereas the completely randomized variance estimator would have non-zero variance.

To further explore this question, we compare the variances of the standard variance estimators (under their respective designs) in a simulation in Section~\ref{sim:misconceptions}.
We find that as the blocking estimator gets more precise, relative to complete randomization, the precision of the associated variance estimator also improves.
In the case where blocking is not beneficial, we do see a slight increase in the instability of the variance estimator.
Overall we see the relative uncertainty of the blocking estimator is proportional to the true variance of the blocking estimator, and so when the true variance goes down, the uncertainty of estimating that variance goes down as well.
See Section~\ref{sim:misconceptions}.
In general we do not find this additional instability, when blocking is ineffectual, to be a concern; more serious, perhaps, would be the impact of a degrees of freedom adjustment when doing inference in the case of experiments with a small number of small blocks.

%More broadly, when blocking is ``good'' we expect the true variance of our treatment effect estimator to be lower under blocking than complete randomization.
%In this case the variance of our variance estimator will also tend to be lower.
%However, when the blocking is ``bad'' we would expect blocking to not be beneficial in terms of variance of our estimator.
%Further, when blocking is ``bad'' the variance of our variance estimator could also be higher than under complete randomization.

\section{Simulations}\label{sec:sims}

We underscore the findings and arguments in our theoretical work with a few illustrative simulations.
Our first set of simulations illustrate the general value of blocking along with its potential cost, as a function of how successful the researcher is in separating out relatively homogenous sets of units.
The second set of simulations unpacks the difficulties in obtaining overall theoretical results on flexible blocking by showing a range of scenarios, including one that might deceptively look beneficial, where flexible blocking can hurt.
The third set of simulations illustrate the misconceptions discussed in the previous section.

\subsection{Cost and benefit of blocking as a function of block variation}\label{subsec:bk_vs_cr_sim}

\begin{figure}
\centering
\includegraphics[width=0.75\textwidth]{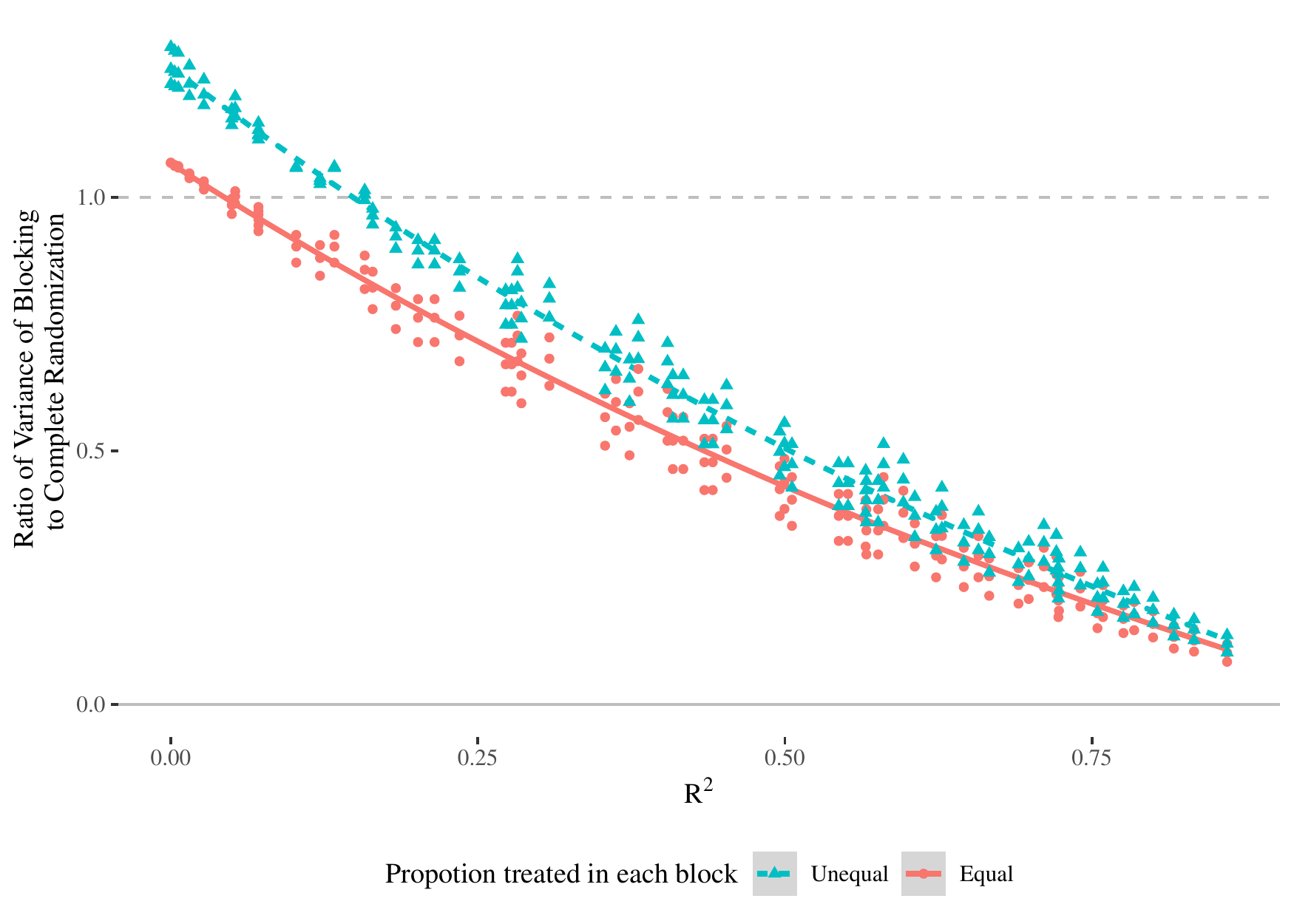}
\caption{Numerical study to assess completely randomized versus blocked design in finite sample context, when $p_k = 0.2$ (equal proportions) or unequal proportions across blocks.
The y axis is $\text{Var}(\tauestbk|\mathcal{S})/\text{Var}(\tauestcr|\mathcal{S})$.}
\label{plot:bk_vs_cr}
\end{figure}
\color{black}

Following our theoretical results, we compare actual variances of the treatment effect estimators to avoid complications of any costs or differences in estimating these variances.
We examine a series of scenarios ranging from a collection of blocks where there is little variation from block to block (causing blocking to be less beneficial) to scenarios where the blocks are well separated and blocking is critical for controlling variation.
In our first numerical study we treat 20\% of units in all of the blocks, with specific block sizes of 10, 15, and 20.
We had a total of 8 blocks.
In the second numerical study we keep the sizes and number of blocks the same, but allow the proportion treated to vary from block to block, from 0.1 to 0.3, with no block having exactly 20\% treated. 
We keep the overall proportion treated the same so that the completely randomized design was the same in both numerical studies.
Because the number of treated units is forced to be a positive integer, the small blocks of size 10 have more disparate proportions (.1 or .3), the blocks of size 15 have slightly less disparate proportions (2/15 or 4/15), and the blocks of size 20 have the least disparate proportions (0.15 or 0.25).
Different set ups for these unequal proportions may yield different results.

Our data generating mechanism generally follows the one presented in \cite{pashley_main}.
The simulations are for the finite sample.
Data was generated once from normal distributions for each setting but guaranteeing that the empirical averages and variances for each block match the theoretical values set for that simulation.
This avoids possible pitfalls of odd behavior from a single random finite sample.
The block means and treatment effects were set such that they had a negative correlation with block size; larger blocks had lower potential outcomes and smaller treatment effects than small blocks.
Through the simulation we varied how close the blocks were in terms of control means and treatment effects.
We also varied the correlation of potential outcomes within blocks as $\rho = 0, 0.5, 1$.
Correlation of 1 corresponds to additive treatment effects within each block.
The variances (and variance ratios) were calculated based on the full schedule of potential outcomes, using the formulas presented in Section~\ref{subsec:est_var}.

The first numerical study corresponds to the mathematical argument presented in Section~\ref{sec:bk_vs_cr_fin} and examines how much blocking can hurt.
The second numerical study illustrates what changes when proportions treated are not the same across blocks, as discussed in Section~\ref{sec:cr_vs_bk_unequal}.

We see on the $x$-axis of Figure~\ref{plot:bk_vs_cr} an $R^2$-like measure of how predictive blocks are of outcome, calculated for each finite data set investigated.
The $R^2$ measure was varied by manipulating the spread of block means under control and the spread of block treatment effects.
That is, the x-axis tells us how ``good'' our blocking is.
If we were blocking based on a covariate value, this would measure how predictive that covariate is of the outcome (higher $R^2$ means more predictive).
The $y$-axis is the ratio of variances of blocking versus complete randomization; values above 1 (dashed horizontal line) indicate a cost to blocking and values below 1 indicate a benefit to blocking.

Generally, as expected, in most scenarios we find blocking to be helpful.
We see large gains in blocking for moderate to large $R^2$, and only a slight penalty to blocking when the $R^2$ is relatively small.
That is, even in the extreme case where blocks are formed in a (unrealistically) poor manner such that all blocks have exactly the same average potential outcomes but there is variability within in each block, we observed at most an increase in variance of about 7\% for using blocking with equal proportions.
On the other hand, the benefits of blocking go up to an almost 92\% reduction in variance in the most extreme scenario considered.

When we vary the proportions treated, the gains of blocking are muted.
The maximum variance increase using blocking in this case is 30\%, which occurs in the ``worst case'' blocking scenario.
This additional cost of blocking is due to the inability to weight all units equally because of the variable proportions treated within each block.
This is analogous to the additional cost of incorporating weights in, for example, survey experiments \citep[][]{miratrix_survey}.
The maximum benefit of blocking was similar to the equal proportion case, with an approximately 89.8\% reduction in variance using blocking.

These results provide some practical guidance: If our blocking factor is even moderately predictive of outcomes, we expect blocking to be beneficial.
With equal proportions treated across all blocks, even if we have poor blocks, the cost of blocking will not be too large.
When we move away from equal proportions being treated, the story gets a bit murkier in terms of exactly how harmful blocking can be, but we still expect gains from blocking as long as our blocking factor is reasonably predictive of outcomes.

\subsection{Costs and perils of flexible blocking}
\label{sec:flex_block_sim}

% latex table generated in R 3.6.2 by xtable 1.8-4 package
% Sun Sep 13 07:44:58 2020
\begin{table}[ht]
\centering
\begin{tabular}{l |rrr|cccc|}
  \hline
				 & \multicolumn{3}{c|}{Relative variance}		 & \multicolumn{4}{c|}{Variance ratios}  \\ 
blocking method & Indep & Linear & Odd & $X$ & $Y$ (Indep) & $Y$ (Linear) & $Y$ (Odd) \\ 
  \hline
Flex & 100.5 & 8.2 & 98.9 & 0.5 & 76.4 & 0.5 & 75.4 \\ 
Interleave & 99.3 & 110.4 & 99.7 & 93.8 & 76.1 & 93.8 & 76.2 \\ 
Peevish & 100.9 & 31.4 & 110.4 & 7.6 & 76.2 & 7.6 & 93.6 \\ 
   \hline
\end{tabular}
\caption{For first 3 columns, numbers are relative percent of the average size of the standard error of the blocking approach versus complete randomization. Each column corresponds to a data generating process dictating the relationship of $X$ to Y.  Each row corresponds to a form of blocking. The final 4 columns show the ratio of average variance of $X$ within block to overall, and the ratio of average variance of $Y$ within block to overall for each DGP considered.}
\label{tab:flex_block_sim_results}
\end{table}

To illustrate the benefits and perils of flexible blocking we designed three data generating scenarios, ``Linear,''  ``Indep,'' and ``Odd,''  dictating the relationship between $X$ and $Y$. In all three $X$ ranges as an integer from 1 to 16.  For ``Linear,'' $Y$ is linearly related to $X$, making $X$ a good thing to block on. For ``Indep'' $Y$ is independent of $X$, making $X$ a useless thing to block on.
Finally, for ``Odd,'' $Y$ is high if $X$ is odd and low if $X$ is even; blocking on $X$ in this case has unclear benefit.

We also examined three methods for blocking.
Our most principled, ``Flex,'' divides the units up by $X$, after sorting $X$, in a manner that ensures each block has an even number of units. 
``Interleave'' makes blocks by interleaving units, trying to make a collection of blocks that are as similar to each other as possible; this is doing blocking the exact wrong way.
Finally, ``Peevish,'' is our perverse, existence-proof blocking method where we group units by similar values of $X$, but also ensure there are the same number of odd and even units in each block.

For each of these 9 combinations we repeatedly (10,000 times) generated data and analyzed it via a simulated blocked experiment and a simulated completely randomized experiment.
Thus, this corresponds to the simple random sampling setting.
We impose a strict zero treatment effect for all units in the experiment, meaning we can measure within versus between block variance on outcomes without worrying about the treatment assignment (as the variance is equal in the two treatment groups).
We then looked at the relative standard error of the blocking approach compared to complete randomization.
Numbers above 100\% means blocking performs worse than doing nothing. Numbers below 100\% show blocking to be beneficial.

Results are given in Table~\ref{tab:flex_block_sim_results}.
When $X$ is independent, how we block does not matter. 
Blocking neither harms nor helps.
Interleaving (deliberately making heterogeneous blocks with respect to $X$) can cause harm if $X$ matters: when $X$ is linearly related to $Y$ we have a variance increase.
And finally, as a proof of concept we show that for our ``Peevish' blocking approach, even though we are reducing the within-block variance of $X$ we are doing it in a manner that exactly fails given the odd data generating process we consider: while this seems unlikely to happen in practice, this demonstrates that, in principle, one could get hurt by a blocking algorithm even while it looks like things are improved.
This combined with our ``Odd'' DGP shows that we can not get any mathematical guarantees on blocking causing no harm without further assumptions on the DGP or restrictions on the blocking approach.

\subsection{Illustration of the misconceptions of Section~\ref{sec:ignore_bk}}
\label{sim:misconceptions}

We next extend the analysis of our simulations to briefly illustrate our two misconceptions in Section~\ref{sec:ignore_bk}.
The simulation setting is the same as Section~\ref{subsec:bk_vs_cr_sim}, but we only look at the case with equal proportions treated across blocks.
Because both of the misconceptions dealt with variance estimation, for each finite population we simulated 5000 randomizations and calculated the various variance estimators for each randomization.
For the first misconception considered in Section~\ref{sec:ignore_bk}, we explore the results of using the completely randomized variance estimator as an estimator for the blocked design.
For the second misconception considered in Section~\ref{sec:var_cr_vs_bk}, we compare the variance of the standard Neyman variance estimator used with the completely randomized design to the standard extension of the Neyman variance estimator to blocking with the blocked design.
Note that we have at least two units in each block with equal proportions treated, allowing the usual blocking variance estimator to be used here.

The results are shown in Figure~\ref{plot:misconceptions}.
The graphs are restricted to low values of $R^2$ to showcase where the more interesting results occur.
In the left graph we see that, for some scenarios considered with a very low $R^2$ value, using the completely randomized variance estimator with a blocked experiment can result in underestimation (in expectation) of the true variance of the blocked experiment; this is illustrated by points falling below the line at 1. 
This demonstrates that it is possible for this approach to be anti-conservative (although only slightly so).
More seriously, this estimator rapidly becomes substantially \emph{conservative}; we do not advise analyzing a blocked experiment as if it were completely randomized.
For comparison, the standard blocking variance estimator is also shown.
The blocking variance estimator is conservative if there are additive effects within each block but the completely randomized variance estimator quickly becomes more conservative at an $R^2$ less than 0.1.

On the right hand side of Figure~\ref{plot:misconceptions}, we compare the variability of the standard complete randomization and blocking variance estimators for a series of experiments. We see that while it is possible, when blocks are relatively homogeneous, for the complete randomization estimator to have less instability, it is generally substantially more unstable.
The relative instability of the completely randomized variance estimator increases as $R^2$ increases and thus the blocking has lower true variance.

 \begin{figure}
 \centering
 \begin{subfigure}{.5\textwidth}
   \centering
   \includegraphics[width=.9\textwidth]{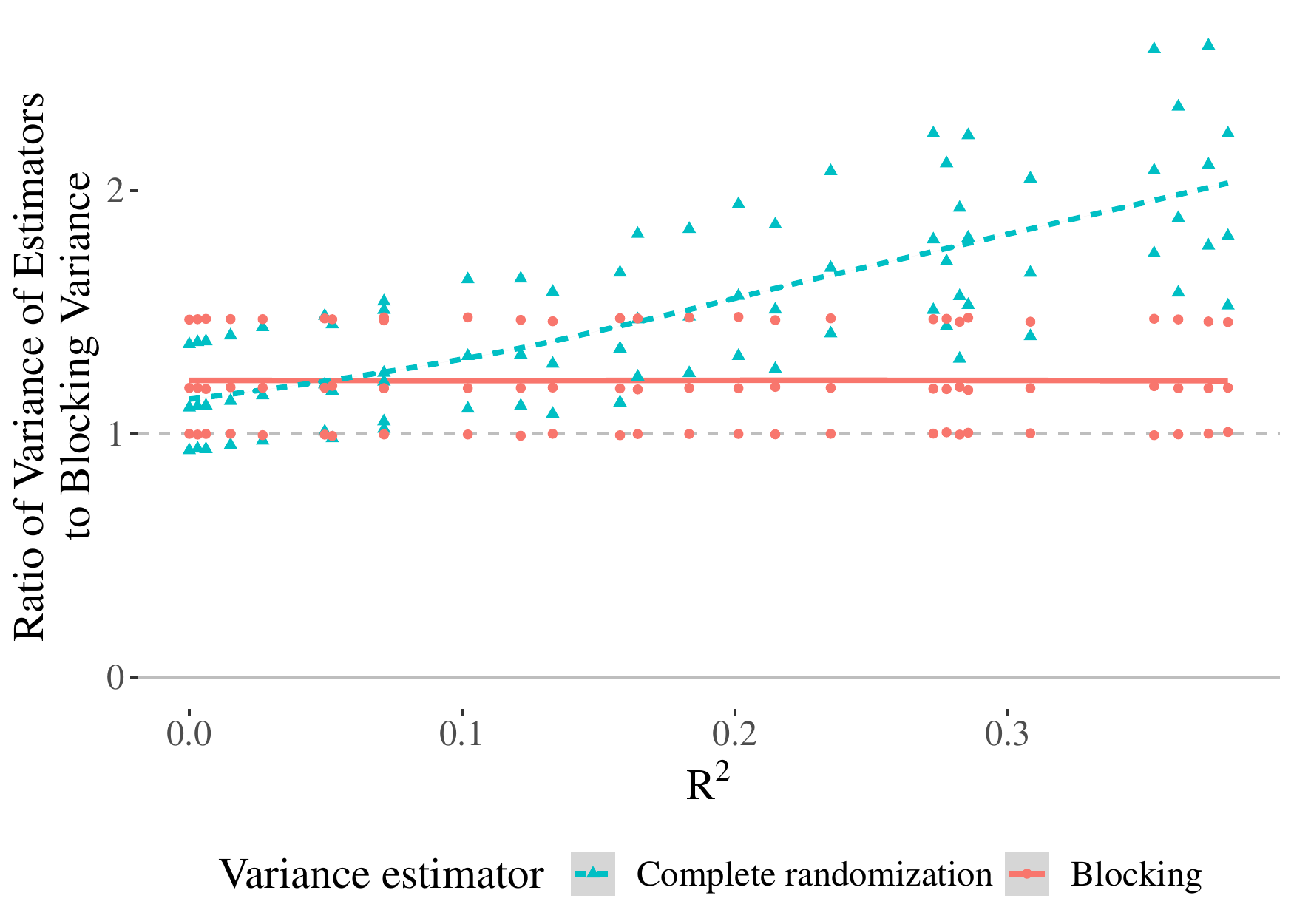}
   \caption{Ratio of estimated variance estimators to the actual blocking variance.}
   \label{fig:sub1}
 \end{subfigure}%
 \begin{subfigure}{.5\textwidth}
   \centering
   \includegraphics[width=.9\textwidth]{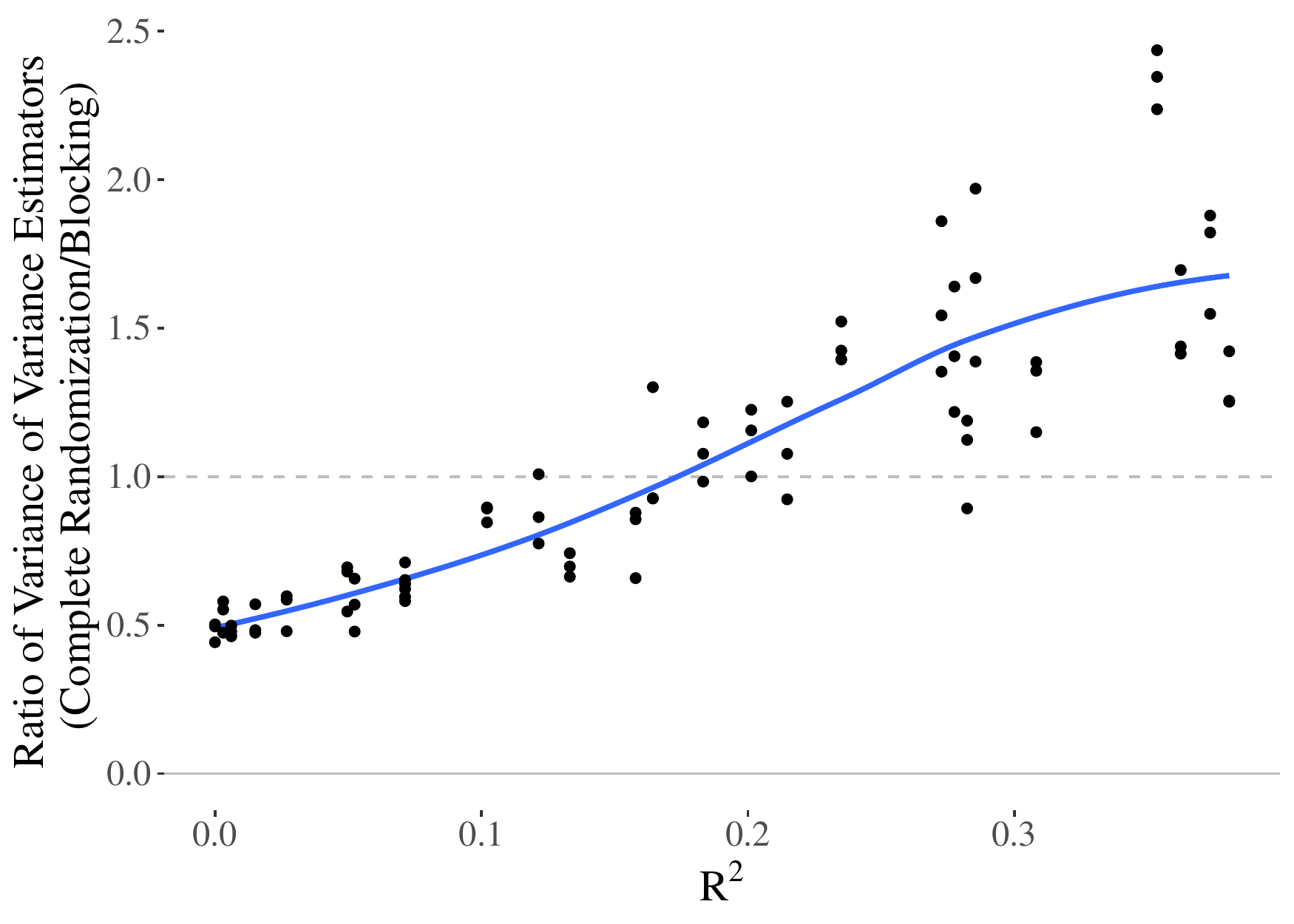}
   \caption{Variance of variance estimators.}
   \label{fig:sub2}
 \end{subfigure}

\caption{Relationship of estimated variance to actual and precision in the estimated variance. At left we see the variance estimator for complete randomization can be lower, on average, than the true blocking variance in some circumstances. At right we see the complete randomization variance estimator can be slightly less, or substantially more variable than the blocking variance estimator.} 
 \label{plot:misconceptions}

 \end{figure}

\section{Teacher professional development, an applied example}
\label{sec:applied_example}

To further explore the benefits and costs of blocking in a more natural context, we used a data set  from a previous randomized trial of a teacher professional development program \citep{Heller:2012ih} to explore a few hypothetical scenarios of how blocking might work or fail to work in practice.
In the actual experiment, the blocks of teachers were defined by geographic area, and were thus outside the control of the researcher.
We have 15 cohorts (blocks) of teachers, with 52 to 87\% of units treated.
Cohort sizes ranged from 8 to 29 teachers.
Here, cohort membership explained only about 10\% of the variance in the baseline test.
Furthermore, due to different pragmatic considerations, the proportion treated in each block varied considerably.\footnote{The original experiment was also a trial of multiple versions of treatment that varied by site, which we have collapsed, creating further imbalance in the block sizes and proportions treated. We also imputed some missing values.}
This is a good example of the worst sort of blocking one might have in a natural context (the blocking was externally forced, hence the non-ideal design).

This experiment provides an opportunity in that three tests---a pre-test and two follow-up tests---were given to the teachers; we use these outcomes to investigate how different hypothetical blocking strategies may have worked differently.
In these investigations, we evaluate how well blocking would work for different hypothetical experiments on units with the same covariates and outcomes as the real units.
We initially use the second test as the baseline and the third test as a final outcome, and assume no actual treatment impact, using the third test as both the potential outcome for control as well as treatment.
This gives us a fully ``observed'' set of potential outcomes on which we can calculate the impacts of different hypothetical design decisions.
That is, we ignore the treatment assignment of units that actually occurred, merely using these data to provide realistic values for an outcome-covariate relationship we might see in practice.
Thus this a numerical illustration based on practical data values, rather than a valid re-analysis of the original study.
We consider several different (hypothetical) blocking decisions, constraining ourselves by the block sizes and treatment proportions actually used.
We then use the formulae from this work to calculate what the standard errors would be under the different blocking strategies considered, holding the schedule of potential outcomes (the third test) and our assumed baseline test (the second test) fixed.

The different explored blocking strategies, and the relative costs and benefits of those strategies, are on Table~\ref{tab:emperical_example}.
The second and third sets of columns show the same exercise, but using different tests as baseline and outcome.
We report on the 2nd and 3rd test as in this case there is no actual treatment delivered between tests, arguably making the relationship between baseline and outcome more natural.
That being said, the trends are nearly identical for alternate configurations.

We first compare the existing design (blocking on geography) versus a hypothetical complete randomization across all units.
Under the finite-sample context, we find a modest 4\% increase in the standard error due to having blocked in this haphazard fashion.

In the positive direction, an experiment where we use the same set of block sizes and proportions treated as our original experiment, but group the teachers into these blocks by their similarity on their baseline test, we find massive benefits to blocking, with the blocked estimator having standard errors 67\% of the size of complete randomization; this type of blocking (on a baseline test) would be a natural choice for experimenters with control over their design.
For reference, the $R^2$ measure of our baseline test on the outcome was about 60\%.  
Near the limit, if we were somehow able to obtain and block on a baseline test that was perfectly correlated with the outcome, our standard errors fall to a quarter of the size of complete randomization.

In the other direction, blocking randomly, using the given pattern of blocks and proportions treated, gives around a 5\% increase in the standard errors compared to complete randomization, with some variation depending on how lucky or unlucky we are in grouping similar units by chance.
The 99th percentile worst allocation, out of 1,000 random allocations tried, was an 8\% increase.
Identifying the worst of a collection of blocking schemes is similar to the minimax analysis of \cite{nordin2020properties}, where they investigate how various restrictions on randomization can lead to risk of higher variance experiments when one has no covariates predictive of outcome.

Differential rates of treatment can have a cost.
To explore this we examine the case of using the blocks as given, randomizing as equal a proportion of units to treatment as possible across blocks (here 71\% with some variation due to needing to treat integer numbers of units).
This case gives a slight benefit to the original blocking structure, with SEs around a half percentage points smaller.
Even in this non-ideal context, the cohorts were still different enough from each other to offset the potential penalty of blocking.
Randomly blocking with equal assignment is still worse than no blocking, but not much worse; even the 99th percentile of 1,000 trials was only a 2.5\% increase.
%The 99 percentile worst case here was only 2\% larger.

Overall, for the hypothetical experiments motivated by these data, the worst blocking choices were not so bad, and the best were quite good. 
This reinforces our overall findings: while it is not true that blocking is always beneficial (here, for a finite sample context), it appears difficult to make it substantially harmful.
%The main concern is substantially varying the proportion assigned to treatment within blocks; without that, it appears as if the risk of blocking is negligible.
The main concern appears to be when the harm of uninformative blocking is amplified by substantially varying proportions assigned to treatment within blocks.
With (roughly) equal proportions of units treated across blocks, the harm of blocking was minimal despite blocks that had little relationship to the outcome.
We note that these explorations are all scenarios with no impact for any unit.
More generally, unequal proportions alone would not necessarily be problematic; as discussed in Section 4, with treatment effect heterogeneity unequal proportions treated could actually help if we happen to assign more units to more variable treatment arms.
%In practice, however, unequal proportions treated are often a result of conditions outside the researchers control and therefore we may be more likely to end up in a situation with problematic blocking.

% l% latex table generated in R 4.0.3 by xtable 1.8-4 package
% Mon Mar  8 13:58:27 2021
\begin{table}[ht]
\centering
\begin{tabular}{lrlrlrl}
  \hline
 Scenario & \multicolumn{6}{c}{Relative Standard Error}\\ 
 Scenario & \multicolumn{2}{c}{T3 (T2)} & \multicolumn{2}{c}{T2 (T1)} & \multicolumn{2}{c}{T1 (T3)}\\ 
  \hline
  Actual design & 104.0\% & 									& 102.1 &	& 99.5 &	\\ 
  Balanced assignment proportions & 99.6 			&		& 99.6  && 98.0  & \\ 
  Randomly blocking units & 104.8 & (108.4) 				& 104.8 & (108.0)& 104.9 & (109.1) \\ 
  Random with balanced assignment & 100.9 &(102.5)			& 100.9	& (102.6)& 100.8	& (102.6)  \\ 
  Blocked by baseline test & 67.1 				&			& 93.7 	& & 88.9 	&  \\ 
  Blocked by max inform baseline test & 25.4	&			& 18.5  & & 27.7  & \\ 
   \hline
\end{tabular}
\caption{Relative size of standard error for blocked vs. complete randomization under a variety of blocking strategies applied to the \citet{Heller:2012ih} example data.  Numbers in parenthesis are the upper 99\% of 1000 random assignments. First pair columns are using test 2 for baseline values and test 3 for hypothetical outcome, second pair are using test 1 for baseline and test 2 for outcome, third pair are test 3 as baseline, test 1 as outcome.}
\label{tab:emperical_example}.
\end{table}

\section{Conclusion}

Different types of blocks and sampling frameworks can change the answer to the question ``Is blocking always beneficial in terms of the precision of my estimators?''
We argue that these varying factors are why the current literature can seem confusing and contradictory.
Overall, our answer is that blocking is often beneficial, but there are many nuances.

We carefully compared complete randomization to blocking, identifying that prior literature has often collapsed the sampling step and randomization step.
Overall, we show that blocking often, but not always, improves precision, and that guarantees about blocking depend on the framework adopted.
Blocking will not reduce precision, compared to a complete randomization, when working in the stratified sampling framework with equal proportion of units treated across blocks, no matter how small the blocks are or how poorly they separate the units.
Similarly, we find that in the simple random sampling setting, given a fixed algorithm for creating flexible blocks, if one makes blocks out of covariates independent of the potential outcomes (the ``bad idea'' scenario), the blocking estimator will also be no worse than complete randomization.
In the other two main frameworks considered, however, blocking is not guaranteed to reduce variance. 
We also show that the variance estimators for blocked experiments do not, as is sometimes believed, generally have higher instability than with complete randomization.

%Without prior knowledge about the distribution of potential outcomes, it is impossible to know before an experiment is conducted whether blocking will improve the estimates in these latter two settings. 

These results assume that the blocks have equal proportions of units treated; if the proportions treated differ, we lose all guarantees that blocking will reduce variance, regardless of framework.
That being said, the simulations and numerical example show the potential for large gains of blocking, even with unequal proportions.
While the cost of blocking on a weakly predictive covariate in this context can be larger, we still did not see substantial losses of precision.
%Overall, we find blocking to still be a powerful tool with unequal proportions treated, but also recommend extra care when choosing how to form blocks.

Even when blocking is unlikely to be helpful in terms of precision, there are several reasons an experimenter might block.
First and foremost, an experimenter may simply be forced to block given the context or constraints of the experiment; our work suggests that little sleep should be lost when this occurs.
As the numerical example shows, the negative impact is likely to be small.
The simulations further show that blocking can lead to very large gains and only a small potential cost in finite sample settings with equal proportion treated.
Blocking can also guarantee that one has good balance on those covariates used to make ones blocks; this can increase the credibility of an experiment, regardless of any documented relationship between the covariates used and outcomes.
In general, experiments with observed systematic differences in the treatment and control group are viewed with greater skepticism.
Blocking is good insurance against this concern.

Overall, we advocate for the advice ``thoughtfully block when you can'' to emphasize that blocking is usually beneficial, but must be applied with some thought to avoid edge cases such as inadvertently creating blocks that are equal in distribution.
Where possible, researchers should form blocks out of covariates predictive of outcome.
They might consider blocking on multiple such covariates to increase the likelihood of obtaining a beneficial blocking.
Unless one can predict how the variation in the treated and control units will differ for different blocks, we advise keeping the proportion of units treated similar across blocks.
In terms of analysis, we show that one should analyze as a blocked experiment if blocking was done:  completely randomized variance estimators are not necessarily conservative for the blocked design.%, especially if unequal proportions are treated across blocks.

In future investigations, it would be useful to assess other practical concerns with blocked designs.
Two such concerns are (1) degrees of freedom concerns due to the larger number of parameters that need estimation, and (2) further assessment of the stability of variance estimators, which we touched on in Section~\ref{sec:var_cr_vs_bk}.
The tradeoff between decreased precision and reduced degrees of freedom by using blocked or matched pairs designs has been noted by others (e.g., \citeauthor{box2005statistics}, \citeyear{box2005statistics}, p. 93; \citeauthor{imbens2011experimental}, \citeyear{imbens2011experimental}; \citeauthor{snedecor1989statistical}, \citeyear{snedecor1989statistical}, p. 101) and is an important practical limitation to consider when using these designs.
Future work should investigate how the real costs of degrees of freedom loss and instability in variance estimation depend on the experimental design within these frameworks.

\bibliographystyle{apalike}
\bibliography{poststratref}{}

\begin{thebibliography}{}

\bibitem[Box et~al., 1978]{box1978statistics}
Box, G., Hunter, W., and Hunter, J. (1978).
\newblock {\em Statistics for experimenters: an introduction to design, data
  analysis, and model building}.
\newblock Wiley series in probability and mathematical statistics: Applied
  probability and statistics. John Wiley \& Sons.

\bibitem[Box et~al., 2005]{box2005statistics}
Box, G.~E., Hunter, J.~S., and Hunter, W.~G. (2005).
\newblock Statistics for experimenters.
\newblock In {\em Wiley Series in Probability and Statistics}. John Wiley \&
  Sons, Hoboken, NJ.

\bibitem[Branson et~al., 2016]{branson2016improving}
Branson, Z., Dasgupta, T., Rubin, D.~B., et~al. (2016).
\newblock Improving covariate balance in 2k factorial designs via
  rerandomization with an application to a new york city department of
  education high school study.
\newblock {\em The Annals of Applied Statistics}, 10(4):1958--1976.

\bibitem[Heller et~al., 2012]{Heller:2012ih}
Heller, J.~I., Daehler, K.~R., Wong, N., Shinohara, M., and Miratrix, L.~W.
  (2012).
\newblock {Differential effects of three professional development models on
  teacher knowledge and student achievement in elementary science}.
\newblock {\em Journal of Research in Science Teaching}, 49(3):333--362.

\bibitem[Higgins et~al., 2015]{higgins2015blocking}
Higgins, M.~J., S{\"a}vje, F., and Sekhon, J.~S. (2015).
\newblock Blocking estimators and inference under the {N}eyman-{R}ubin model.
\newblock {\em arXiv preprint arXiv:1510.01103}.

\bibitem[Higgins et~al., 2016]{higgins2016improving}
Higgins, M.~J., S{\"a}vje, F., and Sekhon, J.~S. (2016).
\newblock Improving massive experiments with threshold blocking.
\newblock {\em Proceedings of the National Academy of Sciences},
  113(27):7369--7376.

\bibitem[Imai, 2008]{Imai2008}
Imai, K. (2008).
\newblock Variance identification and efficiency analysis in randomized
  experiments under the matched-pair design.
\newblock {\em Stat. Med.}, 27(24):4857--4873.

\bibitem[Imai et~al., 2008]{imai_king_stuart_2008}
Imai, K., King, G., and Stuart, E.~A. (2008).
\newblock Misunderstandings between experimentalists and observationalists
  about causal inference.
\newblock {\em J. Roy. Statist. Soc. Ser. A}, 171(2):481--502.

\bibitem[Imbens, 2011]{imbens2011experimental}
Imbens, G.~W. (2011).
\newblock Experimental design for unit and cluster randomid trials.
\newblock {\em Conf. International Initiative for Impact Evaluation,
  Cuernavaca}.

\bibitem[Imbens and Rubin, 2015]{CausalInferenceText}
Imbens, G.~W. and Rubin, D.~B. (2015).
\newblock {\em Causal Inference for Statistics, Social, and Biomedical
  Sciences: An Introduction}.
\newblock Cambridge University Press, New York.

\bibitem[Li et~al., 2018]{li2018asymptotic}
Li, X., Ding, P., and Rubin, D.~B. (2018).
\newblock Asymptotic theory of rerandomization in treatment--control
  experiments.
\newblock {\em Proceedings of the National Academy of Sciences},
  115(37):9157--9162.

\bibitem[Lohr, 2009]{lohr2010sampling}
Lohr, S.~L. (2009).
\newblock {\em Sampling: {D}esign and {A}nalysis}.
\newblock Cengage Learning, Boston, 2nd edition.

\bibitem[Miratrix et~al., 2018]{miratrix_survey}
Miratrix, L.~W., Sekhon, J.~S., Theodoridis, A.~G., and Campos, L.~F. (2018).
\newblock Worth weighting? {How} to think about and use weights in survey
  experiments.
\newblock {\em Political Analysis}, 26(3):275--291.

\bibitem[Miratrix et~al., 2013]{Miratrix2013}
Miratrix, L.~W., Sekhon, J.~S., and Yu, B. (2013).
\newblock Adjusting treatment effect estimates by post-stratification in
  randomized experiments.
\newblock {\em J. Roy. Statist. Soc. Ser. B}, 75(2):369--396.

\bibitem[Morgan and Rubin, 2015]{morgan2015rerandomization}
Morgan, K.~L. and Rubin, D.~B. (2015).
\newblock Rerandomization to balance tiers of covariates.
\newblock {\em Journal of the American Statistical Association},
  110(512):1412--1421.

\bibitem[Morgan et~al., 2012]{morgan2012rerandomization}
Morgan, K.~L., Rubin, D.~B., et~al. (2012).
\newblock Rerandomization to improve covariate balance in experiments.
\newblock {\em The Annals of Statistics}, 40(2):1263--1282.

\bibitem[Nordin and Schultzberg, 2020]{nordin2020properties}
Nordin, M. and Schultzberg, M. (2020).
\newblock Properties of restricted randomization with implications for
  experimental design.
\newblock {\em arXiv preprint arXiv:2006.14888}.

\bibitem[Pashley and Miratrix, 2020]{pashley_main}
Pashley, N.~E. and Miratrix, L.~W. (2020).
\newblock Insights on variance estimation for blocked and matched pairs
  designs.
\newblock {\em Journal of Educational and Behavioral Statistics},
  0(0):1076998620946272.
\newblock Online first.

\bibitem[Rubin, 1974]{rubin_1974}
Rubin, D.~B. (1974).
\newblock Estimating causal effects of treatments in randomized and
  nonrandomized studies.
\newblock {\em Journal of Educational Psychology}, 66(5):688--701.

\bibitem[Rubin, 1980]{rubin_1980}
Rubin, D.~B. (1980).
\newblock Randomization analysis of experimental data: {T}he {F}isher
  randomization test comment.
\newblock {\em J. Amer. Statist. Assoc.}, 75(371):591--593.

\bibitem[S{\"a}rndal et~al., 2003]{sarndal2003model}
S{\"a}rndal, C.-E., Swensson, B., and Wretman, J. (2003).
\newblock {\em Model assisted survey sampling}.
\newblock Springer, New York.

\bibitem[S{\"a}vje, 2015]{savje2015performance}
S{\"a}vje, F. (2015).
\newblock The performance and efficiency of threshold blocking.
\newblock {\em arXiv preprint arXiv:1506.02824}.

\bibitem[Schultzberg and Johansson, 2019]{schultzberg2019re}
Schultzberg, M. and Johansson, P. (2019).
\newblock Re-randomization: A complement or substitute for stratification in
  randomized experiments?
\newblock
  https://www.diva-portal.org/smash/record.jsf?pid=diva2\%3A1306368\&dswid=-1295.

\bibitem[Snedecor and Cochran, 1989]{snedecor1989statistical}
Snedecor, G. and Cochran, W. (1989).
\newblock {\em Statistical Methods}.
\newblock Iowa State University Press, 8th edition.

\bibitem[Splawa-Neyman et~al., 1990]{neyman_1923}
Splawa-Neyman, J., Dabrowska, D.~M., and Speed, T. (1923/1990).
\newblock On the application of probability theory to agricultural experiments.
  {E}ssay on principles. {S}ection 9.
\newblock {\em Statist. Sci.}, 5(4):465--472.

\end{thebibliography}

\begin{appendices}
\renewcommand\appendixname{Supplementary Material}

These Supplementary Material contain the derivations of the comparison results, and also the proofs for the theorems regarding ignoring blocking.

\section{Derivations of blocking versus complete randomization differences}\label{append:cr_vs_bk}

Unless otherwise noted, the following section assumes that $p_k = p$ for all $k=1,\hdots,K$.

\subsection{Finite sample, Equation~(\ref{eq:cr_vs_bk_fin_samp})}\label{append:cr_vs_bk_fs}
We start by deriving the result of Equation~(\ref{eq:cr_vs_bk_fin_samp}),
\begin{align*}
\text{var}\left(\tauestcr|\mathcal{S}\right) &- \text{var}\left(\tauestbk|\mathcal{S}\right) \\
=&\sum_{k=1}^K\frac{1}{n(n-1)}\Big[\nk\left(\sqrt{\frac{p}{1-p}}\meanpock +\sqrt{\frac{1-p}{p}}\meanpotk- \left(\sqrt{\frac{p}{1-p}}\meanpoc+\sqrt{\frac{1-p}{p}}\meanpot\right)\right)^2 \\
& - \frac{n-\nk}{n(\nk-1)}\sum_{i:b_i=k}\left(\sqrt{\frac{p}{1-p}}\poc + \sqrt{\frac{1-p}{p}}\pot-\left(\sqrt{\frac{p}{1-p}}\meanpock+\sqrt{\frac{1-p}{p}}\meanpotk\right)\right)^2\Big].
\end{align*}

We have from usual results \citep[][p. 89]{CausalInferenceText}, in addition to independence of treatment assignment within blocks and the assumption that $p_k = p$ for $k=1,...,K$, that
\begin{align*}
\text{var}\left(\tauestcr|\mathcal{S}\right) &= \frac{\Sc}{n_c} +  \frac{\St}{n_t} - \frac{\Stc}{n}= \frac{\Sc}{(1-p)n} +  \frac{\St}{pn} - \frac{\Stc}{n}\\
\text{and}&\\
\text{var}\left(\tauestbk|\mathcal{S}\right) 
&= \sum_{k=1}^K\frac{\nk^2}{n^2}\left(\frac{\Sck}{\nck} +  \frac{\Stk}{\ntk} - \frac{\Stck}{\nk}\right)= \sum_{k=1}^K\frac{\nk^2}{n^2}\left(\frac{\Sck}{(1-p)\nk} +  \frac{\Stk}{p\nk} - \frac{\Stck}{\nk}\right).
\end{align*}

To take the difference, we need to write the complete randomization variance in terms of the block components.
First we look at the expansion of $\Sz$:
\begin{align*}
\Sz &=\frac{1}{n-1}\sum_{i=1}^n\left(\poz - \meanpoz\right)^2\\
&=\frac{1}{n-1}\sum_{k=1}^K\sum_{i: b_i = k}\left(\poz - \meanpozk + \meanpozk-\meanpoz\right)^2\\
&= \sum_{k=1}^K\frac{\nk-1}{n-1}\Szk + \sum_{k=1}^K\frac{\nk}{n-1}\left(\meanpozk - \meanpoz\right)^2.
\end{align*}

Now do the same for $\Stc$:
\begin{align*}
\Stc 
&= \sum_{k=1}^K\frac{\nk-1}{n-1}\Stck + \sum_{k=1}^K\frac{\nk}{n-1}\left(\taukfs-\taufs\right)^2.
\end{align*}

\begin{align*}
&\text{var}\left(\tauestcr|\mathcal{S}\right) - \text{var}\left(\tauestbk|\mathcal{S}\right) \\
&=\frac{\Sc}{(1-p)n} +  \frac{\St}{pn} - \frac{\Stc}{n} - \left[\sum_{k=1}^K\frac{\nk^2}{n^2}\left(\frac{\Sck}{(1-p)\nk} +  \frac{\Stk}{p\nk} - \frac{\Stck}{\nk}\right)\right]\\
&=\frac{\sum_{k=1}^K\frac{\nk-1}{n-1}\Sck + \sum_{k=1}^K\frac{\nk}{n-1}\left(\meanpock - \meanpoc\right)^2}{(1-p)n} +  \frac{\sum_{k=1}^K\frac{\nk-1}{n-1}\Stk + \sum_{k=1}^K\frac{\nk}{n-1}\left(\meanpotk - \meanpot\right)^2}{pn}\\
& - \frac{\sum_{k=1}^K\frac{\nk-1}{n-1}\Stck + \sum_{k=1}^K\frac{\nk}{n-1}\left(\taukfs-\taufs\right)^2}{n}\\
& - \left[\sum_{k=1}^K\frac{\nk^2}{n^2}\left(\frac{\Sck}{(1-p)\nk} +  \frac{\Stk}{p\nk} - \frac{\Stck}{\nk}\right)\right]\\
&=\underbrace{\frac{\sum_{k=1}^K\nk\left(\meanpock - \meanpoc\right)^2}{(1-p)n(n-1)} +  \frac{\sum_{k=1}^K\nk\left(\meanpotk - \meanpot\right)^2}{pn(n-1)} - \frac{\sum_{k=1}^K\nk\left(\taukfs-\taufs\right)^2}{n(n-1)}}_{\textbf{A}}\\
& - \underbrace{\sum_{k=1}^K\Big[\left(\frac{\nk}{(1-p)n^2}-\frac{\nk-1}{(1-p)n(n-1)}\right)\Sck +  \left(\frac{\nk}{pn^2}-\frac{\nk-1}{pn(n-1)}\right)\Stk- \left(\frac{\nk}{n^2}-\frac{\nk-1}{n(n-1)}\right)\Stck\Big]}_{\textbf{B}}
\end{align*}

We have now split our calculation into the between block variation and within block variation pieces.

\textbf{A} is the between block variation:

We expand $\sum_{k=1}^K\nk\left(\taukfs-\taufs\right)^2$ as
\begin{align*}
\sum_{k=1}^K\nk\left(\taukfs-\taufs\right)^2&=\sum_{k=1}^K\nk\left(\meanpotk - \meanpock -(\meanpot - \meanpoc)\right)^2\\
&=\sum_{k=1}^K\nk\left[\left(\meanpotk - \meanpot\right)^2 +\left(\meanpock - \meanpoc\right)^2 - 2\left(\meanpotk - \meanpot\right)\left(\meanpoc - \meanpoc\right)\right].
\end{align*}

So then
\begin{align*}
&\textbf{A}=\frac{\sum_{k=1}^K\nk\left(\meanpock - \meanpoc\right)^2}{(1-p)n(n-1)} +  \frac{\sum_{k=1}^K\nk\left(\meanpotk - \meanpot\right)^2}{pn(n-1)}\\
& - \frac{\sum_{k=1}^K\nk\left[\left(\meanpotk - \meanpot\right)^2 +\left(\meanpoc - \meanpoc\right)^2- 2\left(\meanpotk - \meanpot\right)\left(\meanpoc - \meanpoc\right)\right]}{n(n-1)} \\
&=\frac{\sum_{k=1}^Kp\nk\left(\meanpock - \meanpoc\right)^2}{(1-p)n(n-1)} +  \frac{\sum_{k=1}^K(1-p)\nk\left(\meanpotk - \meanpot\right)^2}{pn(n-1)} + 2\frac{\left(\meanpotk - \meanpot\right)\left(\meanpoc - \meanpoc\right)}{n(n-1)}\\
&=\frac{1}{n-1}\sum_{k=1}^K\frac{\nk}{n}\left(\sqrt{\frac{p}{1-p}}\meanpock +\sqrt{\frac{1-p}{p}}\meanpotk- \left(\sqrt{\frac{p}{1-p}}\meanpoc+\sqrt{\frac{1-p}{p}}\meanpot\right)\right)^2\\
&=\frac{1}{(n-1)}\text{Var}_k\left(\sqrt{\frac{p}{1-p}}\meanpock + \sqrt{\frac{1-p}{p}}\meanpotk\right).
\end{align*}

\textbf{B} is the within block variation:

\begin{align*}
\textbf{B}&=  \sum_{k=1}^K\Big[\left(\frac{n-\nk}{(1-p)n^2(n-1)}\right)\Sck +  \left(\frac{n-\nk}{pn^2(n-1)}\right)\Stk - \left(\frac{n-\nk}{n^2(n-1)}\right)\Stck\Big]\\
&= \frac{1}{n^2(n-1)}\sum_{k=1}^K(n-\nk)\nk\Big[\frac{\Sck}{(1-p)\nk} +  \frac{\Stk}{p\nk} - \frac{\Stck}{\nk}\Big]\\
&= \frac{1}{n^2(n-1)}\sum_{k=1}^K(n-\nk)\nk\text{var}(\taukest|\mathcal{S})
\end{align*}

Then we can write the difference as
\begin{align*}
\text{var}\left(\tauestcr|\mathcal{S}\right) &- \text{var}\left(\tauestbk|\mathcal{S}\right) \\
=&\frac{1}{n-1}\left[ \text{var}_k\left(\sqrt{\frac{p}{1-p}}\meanpock + \sqrt{\frac{1-p}{p}}\meanpotk\right)  - \sum_{k=1}^K\frac{(n-\nk)\nk}{n^2}\text{var}(\taukest|\mathcal{S})\right]\\
=&\sum_{k=1}^K\frac{\nk}{n(n-1)}\Big[\left(\sqrt{\frac{p}{1-p}}\meanpock +\sqrt{\frac{1-p}{p}}\meanpotk- \left(\sqrt{\frac{p}{1-p}}\meanpoc+\sqrt{\frac{1-p}{p}}\meanpot\right)\right)^2 \\
& - \frac{n-\nk}{n}\text{var}(\taukest|\mathcal{S})\Big].
\end{align*}

To write this another way, note that
\begin{align*}
\Stck &= \frac{1}{\nk-1}\sum_{i:b_i=k}\left(\tau_i - \taukfs\right)^2\\
&= \frac{1}{\nk-1}\sum_{i:b_i=k}\left[\left(\pot - \meanpotk\right)^2 + \left(\poc - \meanpock\right)^2 -2\left(\pot - \meanpotk\right)\left(\poc - \meanpock\right)\right].
\end{align*}

So then
\begin{align*}
&\frac{\Sck}{1-p} +\frac{\Stk}{p} - \Stck\\
&=\frac{\frac{1}{\nk-1}\sum_{i:b_i=k}\left(\poc - \meanpock\right)^2}{1-p} +\frac{\frac{1}{\nk-1}\sum_{i:b_i=k}\left(\pot - \meanpotk\right)^2}{p}\\
& - \frac{1}{\nk-1}\sum_{i:b_i=k}\left[\left(\pot - \meanpotk\right)^2 + \left(\poc - \meanpock\right)^2 -2\left(\pot - \meanpotk\right)\left(\poc - \meanpock\right)\right]\\
&=\sum_{i:b_i=k}\frac{1}{\nk-1}\left[\frac{p\left(\poc - \meanpock\right)^2}{1-p} +\frac{(1-p)\left(\pot - \meanpotk\right)^2}{p} + 2\left(\pot - \meanpotk\right)\left(\poc - \meanpock\right)\right]\\
&=\sum_{i:b_i=k}\frac{1}{\nk-1}\left(\sqrt{\frac{p}{1-p}}\left(\poc - \meanpock\right) +\sqrt{\frac{1-p}{p}}\left(\pot - \meanpotk\right)\right)^2\\
&=\sum_{i:b_i=k}\frac{1}{\nk-1}\left(\sqrt{\frac{p}{1-p}}\poc + \sqrt{\frac{1-p}{p}}\pot-\left(\sqrt{\frac{p}{1-p}}\meanpock+\sqrt{\frac{1-p}{p}}\meanpotk\right)\right)^2.
\end{align*}

So we get
\begin{align*}
\text{var}\left(\tauestcr|\mathcal{S}\right) &- \text{var}\left(\tauestbk|\mathcal{S}\right) \\
=&\sum_{k=1}^K\frac{1}{n(n-1)}\Big[\nk\left(\sqrt{\frac{p}{1-p}}\meanpock +\sqrt{\frac{1-p}{p}}\meanpotk- \left(\sqrt{\frac{p}{1-p}}\meanpoc+\sqrt{\frac{1-p}{p}}\meanpot\right)\right)^2 \\
& - \frac{n-\nk}{n(\nk-1)}\sum_{i:b_i=k}\left(\sqrt{\frac{p}{1-p}}\poc + \sqrt{\frac{1-p}{p}}\pot-\left(\sqrt{\frac{p}{1-p}}\meanpock+\sqrt{\frac{1-p}{p}}\meanpotk\right)\right)^2\Big].
\end{align*}

This allows us to see the similarity in the two terms.

\subsection{Simple random sampling: Independence case}\label{append:cr_vs_bk_srs}
The simple random sampling framework has two steps: First, obtain a random sample of units for the experiment.
Then, we form blocks.
We assume there is a fixed procedure that clusters on the observed covariate matrix $\bm{X}$ to form blocks for the given sample.
The number and size of blocks can be sample dependent.
However, conditioning on $\bm{X}$ fixes the number and sizes of blocks.

Now think about a ``bad'' case, that this set of covariates, $\bm{X}$, are actually independent of the potential outcomes, i.e. not predictive.
Then we have, assuming equal proportions treated in each block,
\begin{align*}
\text{var}&\left(\tauestcr|\model_{\text{SRS}}\right) - \text{var}\left(\tauestbk|\model_{\text{SRS}}\right)\\
&= \EE\left[\text{var}(\tauestcr|\mathcal{S})|\model_{\text{SRS}}\right]+\text{var}\left(\EE[\tauestcr|\mathcal{S}]|\model_{\text{SRS}}\right) - \EE\left[\text{var}(\tauestbk|\mathcal{S})|\model_{\text{SRS}}\right]-\text{var}\left(\EE[\tauestbk|\mathcal{S}]|\model_{\text{SRS}}\right)\\
&=\EE\left[\frac{\Sc}{n_c}+\frac{\St}{n_t}-\frac{\Stc}{n}\Big|\model_{\text{SRS}}\right] - \EE\left[\sum_{k=1}^K\frac{\nk^2}{n^2}\left(\frac{\Sck}{\nck}+\frac{\Stk}{\ntk}-\frac{\Stck}{\nk}\right)\Big|\model_{\text{SRS}}\right]\\
&=\frac{\sigmac}{n_c}+\frac{\sigmat}{n_t}-\frac{\sigmatc}{n} - \EE\left[\sum_{k=1}^K\frac{\nk}{n}\left(\frac{\Sck}{n_c}+\frac{\Stk}{n_t}-\frac{\Stck}{n}\right)\Big|\model_{\text{SRS}}\right]\\
&= \frac{\sigmac}{n_c}+\frac{\sigmat}{n_t}-\frac{\sigmatc}{n} - \EE\left[\EE\left[\sum_{k=1}^K\frac{\nk}{n}\left(\frac{\Sck}{n_c}+\frac{\Stk}{n_t}-\frac{\Stck}{n}\right)\Big|\model_{\text{SRS}}, \bm{X}\right]\Big|\model_{\text{SRS}}\right]\\
&=\frac{\sigmac}{n_c}+\frac{\sigmat}{n_t}-\frac{\sigmatc}{n} - \EE\left[\sum_{k=1}^K\frac{\nk}{n}\left(\frac{\sigmac}{n_c}+\frac{\sigmat}{n_t}-\frac{\sigmatc}{n}\right)\Big|\model_{\text{SRS}}\right]\\
&=\frac{\sigmac}{n_c}+\frac{\sigmat}{n_t}-\frac{\sigmatc}{n}- \left(\frac{\sigmac}{n_c}+\frac{\sigmat}{n_t}-\frac{\sigmatc}{n}\right)\\
&=0.
\end{align*}

Note that conditioning on $\bm{X}$ fixes the number of units in each block and the total number of blocks, but potential outcomes are independent of this value of the covariates so we can push the expectation through.

Another way to think about independent covariates is that blocking on $\bm{X}$ essentially induces random blocking.
In the finite sample setting, random blocking is the process of randomly partitioning units into a fixed number of $K$ blocks of fixed size $n_k$ for $k=1,\dots K$.
That is, we first would specify the number and sizes of blocks and then randomly assign the appropriate number of units to each block.
After this initial random step, we would then perform a blocked randomized experiment.
As long as in our initial random blocking step the probability of ending up in each block is the same for all units, this is equivalent to a complete randomization with the same number of total treated units.
This can easily be deduced by noticing that all partitions of the assignment space with the same number of treated units are equally likely under this random blocking design, exactly as under a completely randomized design.

Now consider extending this notion of random blocking to the simple random sampling setting.
Because the covariates are independent of outcomes across sampling steps, blocking is essentially done at random across different samples.
Hence, our blocked design will be equivalent to the completely randomized design.

\subsection{Proof of Theorem~\ref{prop:cr_vs_bk_strat_samp}: Variance comparison under stratified sampling}\label{append:cr_vs_bk_strat_samp}

\begin{proof}
First use decomposition of variance and then simplify using results from the derivation in Appendix \ref{append:cr_vs_bk_fs}.
Let $\popmeanpoz$ and $\sigmaz$ be the population mean and variance, respectively, of the potential outcomes of all units under treatment $z$.
Let  $\popmeanpozk$ and $\sigmazk$ be the population mean and variance of the potential outcomes of units in stratum $k$ under treatment $z$. 

\begin{align}
&\text{var}(\tauestcr|\model_{\text{strat}}) - \text{var}(\tauestbk|\model_{\text{strat}})\nonumber\\
=& \EE\left[\text{var}(\tauestcr|\mathcal{S})|\model_{\text{strat}}\right] + \text{var}\left(\EE[\tauestcr|\mathcal{S}]|\model_{\text{strat}}\right)- \EE\left[\text{var}(\tauestbk|\mathcal{S})|\model_{\text{strat}}\right] - \text{var}\left(\EE[\tauestbk|\mathcal{S}]|\model_{\text{strat}}\right)\nonumber\\
=& \EE\left[\text{var}(\tauestcr|\mathcal{S})|\model_{\text{strat}}\right] - \EE\left[\text{var}(\tauestbk|\mathcal{S})|\model_{\text{strat}}\right]\nonumber\\
=& \EE\Big[\underbrace{\frac{\sum_{k=1}^K \nk\left(\meanpock - \meanpoc\right)^2}{(1-p)(n-1)n} + \frac{\sum_{k=1}^K\nk\left(\meanpotk - \meanpot\right)^2}{p(n-1)n}}_{\textbf{A}} - \underbrace{\frac{\sum_{k=1}^K\nk(\taufs - \taukfs)^2}{n(n-1)}}_{\textbf{B}}|\model_{\text{strat}}\Big] \nonumber\\
& - \sum_{k=1}^K\frac{n-\nk}{n^2(n-1)}\left(\frac{\sigmack}{(1-p)} + \frac{\sigmatk}{p} -\sigmatck\right) \label{eq:deriv_cr_vs_bk_strat_samp}
\end{align}

We start with expectation of the numerators in term \textbf{A}:

\begin{align*}
\EE\left[\left(\meanpozk - \meanpoz\right)^2|\model_{\text{strat}}\right] &= \EE\left[\meanpozk^2- 2\meanpozk\meanpoz + \meanpoz^2|\model_{\text{strat}}\right]\\
&= \underbrace{\EE\left[\meanpozk^2|\model_{\text{strat}}\right]}_{\textbf{A.1}}- 2\EE\left[\meanpozk\meanpoz|\model_{\text{strat}}\right] + \underbrace{\EE\left[\meanpoz^2|\model_{\text{strat}}\right]}_{\textbf{A.2}}.
\end{align*}

\textbf{A.1}:
\begin{align*}
\EE\left[\meanpozk^2|\model_{\text{strat}}\right]&= \text{var}\left(\meanpozk|\model_{\text{strat}}\right) + \EE\left[\meanpozk|\model_{\text{strat}}\right]^2= \frac{\sigmazk}{\nk} + \popmeanpozk^2
\end{align*}

\textbf{A.2}:
\begin{align*}
\EE\left[\meanpoz^2|\model_{\text{strat}}\right] &= \text{var}\left(\meanpoz|\model_{\text{strat}}\right) + \EE\left[\meanpoz|\model_{\text{strat}}\right]^2
=\sum_{k=1}^K\frac{\nk}{n^2}\sigmazk + \left(\sum_{k=1}^K\frac{\nk}{n}\popmeanpozk\right)^2
\end{align*}

Putting \textbf{A.1} and \textbf{A.2} together and combining like terms:
\begin{align*}
\sum_{k=1}^k\frac{\nk}{n}\EE\left[\left(\meanpozk - \meanpoz\right)^2|\model_{\text{strat}}\right] 
&= \sum_{k=1}^K\frac{\nk}{n}\EE\left[\meanpozk^2|\model_{\text{strat}}\right] - \EE\left[\meanpoz^2|\model_{\text{strat}}\right]\\
&= \sum_{k=1}^K\frac{n-\nk}{n^2}\sigmazk + \sum_{k=1}^K\frac{\nk}{n}\popmeanpozk^2 - \left(\sum_{k=1}^K\frac{\nk}{n}\popmeanpozk\right)^2.
\end{align*}

The expectation of \textbf{B} follows in a very similar manner, except now we have treated and control units in the calculation.

\textbf{B} becomes
\begin{align*}
&\sum_{k=1}^K\frac{\nk}{n}\EE\left[\left(\taufs - \taukfs\right)^2|\model_{\text{strat}}\right]\\
 &= \sum_{k=1}^K\frac{n-\nk}{n^2}\sigmatck + \sum_{k=1}^K\frac{\nk}{n}\left(\popmeanpotk - \popmeanpock\right)^2- \left[\left(\sum_{k=1}^K\frac{\nk}{n}\popmeanpotk\right) - \left(\sum_{k=1}^K\frac{\nk}{n}\popmeanpock \right)\right]^2.
\end{align*}

Putting \textbf{A} and \textbf{B} into Equation~\ref{eq:deriv_cr_vs_bk_strat_samp} and simplifying:

\begin{align*}
&\text{var}(\tauestcr|\model_{\text{strat}}) - \text{var}(\tauestbk|\model_{\text{strat}})\\
&= \frac{\sum_{k=1}^K\frac{\nk}{n}\popmeanpock^2 - \left(\sum_{k=1}^K\frac{\nk}{n}\popmeanpock\right)^2}{(1-p)(n-1)} + \frac{\sum_{k=1}^K\frac{\nk}{n}\popmeanpotk^2 - \left(\sum_{k=1}^K\frac{\nk}{n}\popmeanpotk\right)^2}{p(n-1)}\\
& - \frac{\sum_{k=1}^K\frac{\nk}{n}\left(\popmeanpotk - \popmeanpock\right)^2 - \left(\sum_{k=1}^K\frac{\nk}{n}\left(\popmeanpotk - \popmeanpock\right)\right)^2}{n-1}\\
&= \frac{p}{1-p}\frac{\sum_{k=1}^K\frac{\nk}{n}\popmeanpock^2 - \left(\sum_{k=1}^K\frac{\nk}{n}\popmeanpock\right)^2}{n-1} + \frac{1-p}{p}\frac{\sum_{k=1}^K\frac{\nk}{n}\popmeanpotk^2 - \left(\sum_{k=1}^K\frac{\nk}{n}\popmeanpotk\right)^2}{n-1}\\
& + 2\frac{\sum_{k=1}^K\frac{\nk}{n}\popmeanpotk\popmeanpock - \left(\sum_{k=1}^K\frac{\nk}{n}\mu(t, k)\right)\left(\sum_{k=1}^K\frac{\nk}{n}\mu(c, k)\right)}{n-1}\\
&=\frac{1}{n-1}\left[\frac{p}{1-p}\text{Var}_k\left(\popmeanpock\right) + \frac{1-p}{p}\text{Var}_k\left(\popmeanpotk\right) + 2Cov_k\left(\popmeanpock, \popmeanpotk\right)\right]\\
&=\frac{1}{n-1}\text{Var}_k\left(\sqrt{\frac{p}{1-p}}\popmeanpotk + \sqrt{\frac{1-p}{p}}\popmeanpock\right)\geq 0.
\end{align*}
The variance in the last line is the variance over the blocks, as defined in Equation \ref{eq:var_k_def}. Therefore we have that $\text{var}(\tauestcr|\model_{\text{strat}}) - \text{var}(\tauestbk|\model_{\text{strat}}) \geq 0$ so we are always doing better with blocking in this setting.
\end{proof}

\subsection{Stratified sampling vs SRS  comparisons}\label{append:cr_srs_vs_strat_samp}
\begin{corollary}\label{cor:strat_samp_vs_srs}
The difference between $\text{var}(\tauestcr|\model_{\text{SRS}}) - \text{var}\left(\tauestcr|\model_{\text{strat}}\right)$ may be positive or negative.
\end{corollary}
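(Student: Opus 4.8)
The plan is to express the target quantity as the difference of two results already established, so that the common blocked-design variance term cancels and we are left comparing two explicit, individually nonnegative expressions. Concretely, I would write
\begin{align*}
\text{var}(\tauestcr|\model_{\text{SRS}}) - \text{var}(\tauestcr|\model_{\text{strat}}) &= \left[\text{var}(\tauestcr|\model_{\text{SRS}}) - \text{var}(\tauestbk|\model_{\text{strat}})\right] \\
&\quad - \left[\text{var}(\tauestcr|\model_{\text{strat}}) - \text{var}(\tauestbk|\model_{\text{strat}})\right],
\end{align*}
and substitute the displayed formula from the Remark following Theorem~\ref{prop:cr_vs_bk_strat_samp} into the first bracket and Theorem~\ref{prop:cr_vs_bk_strat_samp} itself into the second. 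Both brackets are known to be nonnegative, so the corollary reduces to showing that neither dominates the other across all populations.

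To make the comparison tractable, I would set $a_k = \popmeanpock - \popmeanpoc$ and $b_k = \popmeanpotk - \popmeanpot$ and collect the weighted sums $A = \sum_{k}\frac{\nk}{n}a_k^2$, $B = \sum_{k}\frac{\nk}{n}b_k^2$, and $C = \sum_{k}\frac{\nk}{n}a_k b_k$. Using $n_{c}=(1-p)n$ and $n_{t}=pn$, the first bracket is simply $A/n_{c} + B/n_{t}$. Expanding the square inside $\text{Var}_k$ in Theorem~\ref{prop:cr_vs_bk_strat_samp} writes the second bracket as $\tfrac{1}{n-1}\big[\tfrac{p}{1-p}A + \tfrac{1-p}{p}B + 2C\big]$, where I would use the identity $\sqrt{p/(1-p)}+\sqrt{(1-p)/p} = 1/\sqrt{p(1-p)}$ to keep things clean. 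The difference is then a single quadratic form in the centered stratum means whose coefficients depend only on $p$ and $n$.

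With that form in hand, I would exhibit two populations of opposite sign. For the negative direction, take no treatment effect, $\popmeanpotk=\popmeanpock$, with genuinely varying stratum means, so that $a_k=b_k$ and $A=B=C=V>0$; a short computation, again using $\sqrt{p/(1-p)}+\sqrt{(1-p)/p}=1/\sqrt{p(1-p)}$, gives the difference equal to $-V/\big[p(1-p)n(n-1)\big]<0$. For the positive direction, take the control means constant across strata ($a_k=0$) but the treatment means varying ($B>0$), so that $A=C=0$; here the difference equals $B(n_{t}-1)/\big[p\,n(n-1)\big]$, which is strictly positive whenever $n_{t}>1$. Since both signs are achievable, the difference is not sign-definite, which is exactly the claim. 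I do not anticipate a genuine obstacle: the only work is the bookkeeping of the second step and checking that each prescribed configuration of stratum means is realized by some valid superpopulation (immediate, e.g.\ by choosing within-stratum point masses), rather than any delicate inequality.
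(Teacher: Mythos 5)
Your proposal is correct, and it reaches the paper's exact expression by a genuinely different route. The paper proves Corollary~\ref{cor:strat_samp_vs_srs} by brute force: it computes $\text{var}(\tauestcr|\model_{\text{strat}})$ from scratch via the decomposition $\EE[\text{var}(\tauestcr|\mathcal{S})|\model_{\text{strat}}]+\text{var}(\EE[\tauestcr|\mathcal{S}]|\model_{\text{strat}})$, expanding the sample variances $\Sc$, $\St$, $\Stc$ into stratum-level components and invoking classical stratified-sampling results, and then subtracts this from the expanded form of $\text{var}(\tauestcr|\model_{\text{SRS}})$. You instead telescope: writing the target as the difference of the Remark's formula and Theorem~\ref{prop:cr_vs_bk_strat_samp} cancels $\text{var}(\tauestbk|\model_{\text{strat}})$ and avoids ever computing $\text{var}(\tauestcr|\model_{\text{strat}})$. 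I checked that your quadratic form agrees with the paper's: with $A=\sum_k\tfrac{\nk}{n}a_k^2$, $B=\sum_k\tfrac{\nk}{n}b_k^2$, $C=\sum_k\tfrac{\nk}{n}a_kb_k$, your coefficients reduce to $\tfrac{1}{n-1}\bigl[\tfrac{n_c-1}{n_c}A+\tfrac{n_t-1}{n_t}B-2C\bigr]$, which is exactly the paper's final display. Your sign analysis is also sharper than the paper's: where the paper argues qualitatively (equal block treatment effects give a negative value; large $n_c,n_t$ relative to mean variation give a positive one), you exhibit two explicit configurations with closed-form values, $-V/[p(1-p)n(n-1)]$ when $a_k=b_k$ and $B(n_t-1)/[pn(n-1)]$ when $a_k=0$, both of which I verified. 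What each approach buys: the paper's direct computation yields the standalone formula for $\text{var}(\tauestcr|\model_{\text{strat}})$, which has independent interest and supports the interpretive discussion of when the benefits of blocking are over- or under-stated; yours is shorter, reuses established results, and replaces asymptotic hand-waving with exact counterexamples. The only implicit assumption you share with the paper is proportional allocation, i.e.\ that the weighted average $\sum_k\tfrac{\nk}{n}\popmeanpozk$ equals $\popmeanpoz$, so that $\text{Var}_k(\popmeanpozk)$ coincides with $A$ and $B$; this is built into the paper's own decomposition of $\sigmaz$, so it is not a gap.
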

\begin{proof}
Compare the previous result to when we assume $\model_{\text{SRS}}$ for complete randomization and $\model_{\text{strat}}$ for blocked randomization. 
\begin{align*}
\text{var}&(\tauestcr|\model_{\text{SRS}}) - \text{var}(\tauestbk|\model_{\text{strat}}) \nonumber\\
 &= \frac{1}{n_{c}}\left(\sum_{k=1}^{K}\frac{n_k}{n}\left(\popmeanpock-\popmeanpoc\right)^2\right) + \frac{1}{n_{t}}\left(\sum_{k=1}^{K}\frac{n_k}{n}\left(\popmeanpotk-\popmeanpot\right)^2\right)\\% \label{eq:cr_vs_bk_srs_fin_samp}\\
& \geq 0 \nonumber
\end{align*}
A form of this result can be found in \cite{imbens2011experimental}.

We can then take the difference, $\text{var}(\tauestcr|\model_{\text{SRS}}) - \text{var}\left(\tauestcr|\model_{\text{strat}}\right)$, between the results under the two frameworks to see whether using different sampling frameworks over or under estimates the benefits of blocking.
First consider the form of the variance under the two different models.
\begin{align*}
\text{var}(\tauestcr|\model_{\text{SRS}}) &= \frac{\sigmac}{n_c} +  \frac{\sigmat}{n_t}\\
&= \frac{\sum_{k=1}^K\frac{\nk}{n}\sigmack + \sum \frac{\nk}{n}\left(\popmeanpock - \popmeanpoc\right)^2}{n_c} +  \frac{\sum_{k=1}^K\frac{\nk}{n}\sigmatk + \sum \frac{\nk}{n}\left(\popmeanpotk - \popmeanpot\right)^2}{n_t}
\end{align*}

\begin{align*}
\text{var}(\tauestcr|\model_{\text{strat}}) &= \EE\left[\text{var}\left(\tauestcr|\mathcal{S}\right)|\model_{\text{strat}}\right] + \text{var}\left(\EE\left[\tauestcr|\mathcal{S}\right]|\model_{\text{strat}}\right) \\
&= \underbrace{\EE\left[\frac{\Sc}{n_c} + \frac{\St}{n_t} - \frac{\Stc}{n}|\model_{\text{strat}}\right]}_{\textbf{A}} + \underbrace{\text{var}\left(\taufs|\model_{\text{strat}}\right)}_{\textbf{B}}
\end{align*}

For \textbf{A} we can use similar techniques from previous proofs to break the pieces up by block:
\begin{align*}
\EE\left[\frac{\Stc}{n}|\model_{\text{strat}}\right]&=\sum_{k=1}^K\left[ \frac{\nk}{n^2}\sigmatck + \frac{\nk}{n(n-1)}\left(\tauk - \tau\right)^2\right]\\
\EE\left[\frac{\Sz}{n_{z}}|\model_{\text{strat}}\right]&=\frac{1}{n_z}\sum_{k=1}^K\left[ \frac{\nk}{n}\sigmazk + \frac{\nk}{n-1}\left(\popmeanpozk - \popmeanpoz\right)^2\right]
\end{align*}

For \textbf{B} we can split up by block and then use classical results from sampling theory on variation of sample means under $\model_{\text{SRS}}$ \cite[Chapter~2]{lohr2010sampling}.

\begin{align*}
\text{var}\left(\taufs|\model_{\text{strat}}\right) &= \sum_{k=1}^K\frac{\nk^2}{n^2}\frac{\sigmatck}{\nk}
\end{align*}

Putting \textbf{A} and \textbf{B} together and collecting similar terms, we have
\begin{align*}
\text{var}(\tauestcr|\model_{\text{strat}}) 
=& \sum_{k=1}^K\left[ \frac{\nk}{n}\frac{\sigmack}{n_c} + \frac{\nk}{n}\frac{\sigmatk}{n_t}\right]\\
& + \sum_{k=1}^K\left[\frac{\nk}{(n-1)n_c}\left(\popmeanpock - \popmeanpoc\right)^2 + \frac{\nk}{(n-1)n_t}\left(\popmeanpotk - \popmeanpot\right)^2\right] \\
& -\sum_{k=1}^K \frac{\nk}{n(n-1)}\left(\tauk - \tau\right)^2.
\end{align*}

So then we can do the subtraction:
\begin{align*}
\text{var}&(\tauestcr|\model_{\text{SRS}}) - \text{var}\left(\tauestcr|\model_{\text{strat}}\right)\\
&=\frac{1}{n(n-1)}\sum_{k=1}^Kn_k\left[\left(\tau_k - \tau\right)^2 -\frac{\left(\popmeanpock -\popmeanpoc\right)^2}{n_c}-\frac{\left(\popmeanpotk -\popmeanpot\right)^2}{n_t}\right]\\
&=\sum_{k=1}^K\frac{\nk}{n(n-1)}\Big[\frac{(n_c-1)\left(\popmeanpock -\popmeanpoc\right)^2}{n_c}+\frac{(n_t-1)\left(\popmeanpotk -\popmeanpot\right)^2}{n_t}\\
& \quad \quad \quad \quad \quad \quad \quad \quad - 2\left(\popmeanpock -\popmeanpoc\right)\left(\popmeanpotk -\popmeanpot\right)\Big].
\end{align*}

We see that this difference could be positive or negative.
In particular, if the blocks all have the same average treatment effect but different control and treatment means, then this expression will be negative, indicating that comparing the variance of blocking under $\model_{\text{strat}}$ to complete randomization under $\model_{\text{SRS}}$ is an underestimate of the benefits of blocking.
On the other hand, if $n_c$ and $n_t$ are large compared to the variation of block average control and treatment outcomes, then we would expect the negative terms in the first expression to be small, resulting in the difference being positive.
This would mean that there is an overestimate of the benefits of blocking when comparing the variance of blocking under $\model_{\text{strat}}$ to complete randomization under $\model_{\text{SRS}}$.

If we have $n_c$ and $n_t$ large so $\frac{n_c-1}{n_c} \approx 1 \approx \frac{n_t-1}{n_t}$ then we have
\begin{align*}
\text{var}\left(\tauestcr|\model_{\text{SRS}}\right) - \text{var}\left(\tauestcr|\model_{\text{strat}}\right) &= \frac{1}{n(n-1)}\sum_{k=1}^K\nk\left(\tauk-\tau\right)^2.
\end{align*}
\end{proof}

\subsection{Unequal treatment proportions}\label{append:cr_vs_bk_unequal}

Here we present the proof for Theorem~\ref{theorem:unequal_prop}.

\begin{proof}
Let the proportion of units treated in a complete randomization be $p$ and in blocked randomization be $p_k$ for block $k$.
Again, we need to break the complete randomization variance into block components.
For the finite sample, the complete randomization variance, from Appendix~\ref{append:cr_vs_bk_fs} is

\begin{align*}
\text{var}\left(\tauestcr|\mathcal{S}\right) 
&= \frac{\sum_{k=1}^K\frac{\nk-1}{n-1}\Sck + \sum_{k=1}^K\frac{\nk}{n-1}\left(\meanpock - \meanpoc\right)^2}{(1-p)n}\\
& + \frac{\sum_{k=1}^K\frac{\nk-1}{n-1}\Stk + \sum_{k=1}^K\frac{\nk}{n-1}\left(\meanpotk - \meanpot\right)^2}{pn}\\
& - \frac{\sum_{k=1}^K\frac{\nk-1}{n-1}\Stck + \sum_{k=1}^K\frac{\nk}{n-1}\left(\taukfs-\taufs\right)^2}{n}.
\end{align*}

For blocked randomization, the variance is
\begin{align*}
\text{var}\left(\tauestbk|\mathcal{S}\right) &= \sum_{k=1}^K\frac{\nk^2}{n^2}\left(\frac{\Sck}{\nck} + \frac{\Stk}{\ntk} - \frac{\Stck}{\nk}\right)\\
&= \sum_{k=1}^K\frac{\nk}{n^2}\left(\frac{\Sck}{1-p_k} + \frac{\Stk}{p_k} - \Stck\right).
\end{align*}

Then the difference is
\begin{align*}
&\text{var}\left(\tauestcr|\mathcal{S}\right)-\text{var}\left(\tauestbk|\mathcal{S}\right)\\
 &=\sum_{k=1}^K\left[\frac{(1-p_k)(\nk-1)n - (1-p)\nk(n-1)}{(1-p_k)(1-p)n^2(n-1)}\Sck+\frac{p_k(\nk-1)n - p\nk(n-1)}{p_kpn^2(n-1)}\Stk\right]\\
 & + \sum_{k=1}^K\left[\frac{n-\nk}{(n-1)n^2}\Stck\right]+\frac{\sum_{k=1}^K\nk\left[\frac{\left(\meanpock - \meanpoc\right)^2}{1-p} + \frac{\left(\meanpotk - \meanpot\right)^2}{p} - \left(\taukfs-\taufs\right)^2\right]}{(n-1)n}\\
 \color{red}
  &=\frac{1}{n^2(n-1)}\sum_{k=1}^K\Bigg(\frac{(1-p_k)(\nk-1)n - (1-p)\nk(n-1)}{(1-p_k)(1-p)}\Sck+\frac{p_k(\nk-1)n - p\nk(n-1)}{p_kp}\Stk\\
  &\quad \quad \quad \quad \quad \quad+ (n-\nk)\Stck\Bigg)+\frac{\sum_{k=1}^K\nk\left[\frac{\left(\meanpock - \meanpoc\right)^2}{1-p} + \frac{\left(\meanpotk - \meanpot\right)^2}{p} - \left(\taukfs-\taufs\right)^2\right]}{(n-1)n}.
\end{align*}
\color{black}

For the stratified random sampling ($\model_{\text{strat}}$), we use results from the proof in Appendix \ref{append:cr_vs_bk_strat_samp} and the above derivation, and combine like terms to get

\begin{align*}
&\text{var}\left(\tauestcr|\model_{\text{strat}}\right)-\text{var}\left(\tauestbk|\model_{\text{strat}}\right)\\
  &=\frac{1}{n-1}\sum_{k=1}^K\frac{\nk}{n}\left(\sqrt{\frac{1-p}{p}}\popmeanpock + \sqrt{\frac{p}{1-p}}\popmeanpotk - \left[\sqrt{\frac{1-p}{p}}\popmeanpoc + \sqrt{\frac{p}{1-p}}\popmeanpot \right]\right)^2\\
&\quad +\sum_{k=1}^K\left[\frac{(p-p_k)n_k}{(1-p_k)(1-p)n^2}\sigmack+\frac{(p_k - p)\nk}{p_kpn^2}\sigmatk\right].
\end{align*}
More details on the derivation can be given upon request.

This can also be written as
\begin{align}
&\text{var}\left(\tauestcr|\model_{\text{strat}}\right)-\text{var}\left(\tauestbk|\model_{\text{strat}}\right) \nonumber\\
  &=\frac{1}{n-1}\text{Var}_k\left(\sqrt{\frac{p}{1-p}}\popmeanpock + \sqrt{\frac{1-p}{p}}\popmeanpotk \right)+\sum_{k=1}^K\frac{(p-p_k)n_k}{n^2}\left[\frac{\sigmack}{(1-p_k)(1-p)}-\frac{\sigmatk}{p_kp}\right] .\label{eq:cr_vs_bk_unequal_prop}
\end{align}

The first term in Equation \ref{eq:cr_vs_bk_unequal_prop} is equal to the result in Proposition \ref{prop:cr_vs_bk_strat_samp} and the second term, capturing the additional variability due to varying proportions, is zero when $p_k=p$ for all $k$.
In general, the second term of Equation \ref{eq:cr_vs_bk_unequal_prop} can be positive or negative.

In particular, if by some bad luck, $pp_k\sigmack > (1-p)(1-p_k)\sigmatk$ for all blocks where $p_k > p$ and $pp_k\sigmack < (1-p)(1-p_k)\sigmatk$ for all blocks where $p_k < p$, this term will be negative.
If, in this case, the population mean potential outcomes for all blocks are approximately equal, the entire expression will be negative.
The two terms are of the same order with respect to sample size $n$, making comparison easier.
\end{proof}

\section{Proofs of consequences of ignoring blocking}\label{append:ignore_block}

\subsection{Proof of Theorem~\ref{theorem:pretend_cr_finite}}
\label{append:use_cr_fin_samp}

\begin{proof}
We perform a blocked randomization but then use the variance estimator from a complete randomization, $\frac{\scest}{n_c} + \frac{\stest}{n_t}$.
We will condition on $\textbf{P}_{blk}$, the assignment mechanism being blocked randomization, throughout to make this clear.
For the finite sample framework, the true variance would still be
\[\text{var}\left(\tauestbk|\mathcal{S}, \textbf{P}_{blk}\right) = \frac{\nk^2}{n^2}\left(\frac{\Sck}{\nck} + \frac{\Stk}{\ntk} - \frac{\Stck}{\nk}\right).\] 

Again, we assume that $p_k = p$ for all $k=1,\hdots,K$.
Then 
\[\meanpozobs = \sum_{k=1}^K \frac{\nzk}{n_z}\meanpozkobs = \sum_{k=1}^K \frac{n_k}{n}\meanpozkobs .\]
We have
\begin{align*}
\szest=& \frac{1}{n_z-1}\sum_{i: Z_i=z} \left(\poz - \meanpozobs\right)^2\\
=& \frac{1}{n_z-1}\sum_{k=1}^K \sum_{i: Z_i=z, b_i = k} \left(\poz -\meanpozk + \meanpozk- \meanpozobs\right)^2\\
=& \frac{1}{n_z-1}\sum_{k=1}^K \sum_{i: Z_i=z, b_i = k}\Big[ \left(\poz -\meanpozk\right)^2 +2 \left(\poz -\meanpozk\right)\left(\meanpozk- \meanpozobs\right)\\
&\qquad \qquad \qquad \qquad \qquad + \left(\meanpozk- \meanpozobs\right)^2\Big]\\
=& \frac{1}{n_z-1}\Big[\underbrace{\sum_{k=1}^K \sum_{i: Z_i=z, b_i = k} \left(\poz -\meanpozk\right)^2}_{\textbf{A}} +2 \underbrace{\sum_{k=1}^K\nzk\left(\meanpozkobs -\meanpozk\right)\left(\meanpozk- \meanpozobs\right)}_{\textbf{B}}\\
&\qquad \qquad + \underbrace{\sum_{k=1}^K\nzk\left(\meanpozk- \meanpozobs\right)^2}_{\textbf{C}}\Big].
\end{align*}

Now expand and take the expectation of each term separately. We start with \textbf{A} and note that $\EE\left[\mathbb{I}_{Z_i=z}|\textbf{P}_{blk}\right]$ is the same for all units because the proportion treated is assumed to be the same in all blocks.
\begin{align*}
\EE\left[\sum_{k=1}^K \sum_{i: Z_i=z, b_i = k} \left(\poz -\meanpozk\right)^2|\mathcal{S}, \textbf{P}_{blk}\right]&=\EE\left[\sum_{k=1}^K \sum_{i: b_i = k}\mathbb{I}_{Z_i=z}\left(\poz -\meanpozk\right)^2|\mathcal{S},\textbf{P}_{blk}\right]\\
&=\sum_{k=1}^K \EE\left[\mathbb{I}_{Z_i=z}|\textbf{P}_{blk}\right](\nk-1)\Szk
\end{align*}

Now \textbf{B},

\begin{align*}
&\EE\left[\sum_{k=1}^K\nzk\left(\meanpozkobs -\meanpozk\right)\left(\meanpozk- \meanpozobs\right)|\mathcal{S}, \textbf{P}_{blk}\right]\\
&=\EE\left[\sum_{k=1}^K\nzk\left(\meanpozkobs\meanpozk -\meanpozk^2+\meanpozk\meanpozobs- \meanpozkobs\meanpozobs\right)|\mathcal{S}, \textbf{P}_{blk}\right]\\
&=\EE\left[\sum_{k=1}^K\nzk\left(\meanpozk\meanpozobs- \meanpozkobs\meanpozobs\right)|\mathcal{S}, \textbf{P}_{blk}\right]\\
&=n_z\left(\meanpoz^2- \EE\left[\meanpozobs^2|\mathcal{S}, \textbf{P}_{blk}\right]\right)\\
&=n_z\left(- \text{var}\left(\meanpozobs|\mathcal{S}, \textbf{P}_{blk}\right)\right)\\
&=-n_z\sum_{k=1}^K\frac{\nk^2}{n^2}\text{var}\left(\meanpozkobs|\mathcal{S}, \textbf{P}_{blk}\right)\\
&=-n_z\sum_{k=1}^K\frac{\nzk^2}{n_z^2}\frac{\nk - \nzk}{\nk}\frac{\Szk}{\nzk}\\
&=-\sum_{k=1}^K\frac{\nzk}{n_z}\left(1-\EE\left[\mathbb{I}_{Z_i=z}|\textbf{P}_{blk}, b_i=k\right]\right)\Szk.
\end{align*}

Now \textbf{C},

\begin{align*}
\EE&\left[\sum_{k=1}^K\nzk\left(\meanpozk- \meanpozobs\right)^2|\mathcal{S}\right]\\
 &=\sum_{k=1}^K\nzk\meanpozk^2- 2n_z\meanpoz^2+n_z\EE\left[\meanpozobs^2|\mathcal{S}, \textbf{P}_{blk}\right]\\
&=\sum_{k=1}^K\nzk\meanpozk^2- 2n_z\meanpoz^2+n_z\text{var}\left(\meanpozobs|\mathcal{S}, \textbf{P}_{blk}\right) + n_z\EE\left[\meanpozobs|\mathcal{S}, \textbf{P}_{blk}\right]^2\\
&=\sum_{k=1}^K\nzk\meanpozk^2- n_z\meanpoz^2+\sum_{k=1}^K\frac{\nzk(1-\EE\left[\mathbb{I}_{Z_i=z}|\textbf{P}_{blk}\right])}{n_z}\Szk .
\end{align*}

Putting it all back together,

\begin{align*}
\EE&\left[\szest|\mathcal{S}, \textbf{P}_{blk}\right]\\
=&\frac{1}{n_z-1}\Big[\sum_{k=1}^K \EE\left[\mathbb{I}_{Z_i=z}\right](\nk-1)\Szk -2\sum_{k=1}^K\frac{\nzk}{n_z}\left(1-\EE\left[\mathbb{I}_{Z_i=z|\textbf{P}_{blk}}\right]\right)\Szk+ \sum_{k=1}^K\nzk\meanpozk^2- n_z\meanpoz^2\\
&+\sum_{k=1}^K\frac{\nzk(1-\EE\left[\mathbb{I}_{Z_i=z}|\textbf{P}_{blk}\right])}{n_z}\Szk\Big]\\
=&\frac{1}{n_z-1}\Big[\sum_{k=1}^K \EE\left[\mathbb{I}_{Z_i=z}|\textbf{P}_{blk}\right](\nk-1)\Szk -\sum_{k=1}^K\frac{\nzk}{n_z}\left(1-\EE\left[\mathbb{I}_{Z_i=z}|\textbf{P}_{blk}\right]\right)\Szk\\
&+ \sum_{k=1}^K\nzk\meanpozk^2-n_z\meanpoz^2\Big]\\
=&\sum_{k=1}^K\left(\frac{\nk}{n}-\frac{\EE\left[\mathbb{I}_{Z_i=z}|\textbf{P}_{blk}\right](n-\nk)}{n(n_z-1)}\right)\Szk+ \frac{1}{n_z-1}\sum_{k=1}^K\nzk\left(\meanpozk- \meanpoz\right)^2.
\end{align*}

So then the bias is

\begin{align*}
&\EE\left[\frac{\scest}{n_c} + \frac{\stest}{n_t}|\mathcal{S}, \textbf{P}_{blk}\right] - \sum_{k=1}^K\frac{\nk^2}{n^2}\left(\frac{\Sck}{\nck} + \frac{\Stk}{\ntk} - \frac{\Stck}{\nk}\right)\\
&=\sum_{k=1}^K\left(\frac{\nk}{(1-p)n^2}-\frac{(1-p)(n-\nk)}{(1-p)n^2(n_c-1)}\right)\Sck+ \frac{1}{n_c-1}\sum_{k=1}^K\frac{\nck}{n_c}\left(\meanpock- \meanpoc\right)^2\\
&+\sum_{k=1}^K\left(\frac{\nk}{pn^2}-\frac{p(n-\nk)}{pn^2(n_t-1)}\right)\Stk+ \frac{1}{n_t-1}\sum_{k=1}^K\frac{\ntk}{n_t}\left(\meanpotk- \meanpot\right)^2\\
& - \sum_{k=1}^K\frac{\nk^2}{n^2}\left(\frac{\Sck}{(1-p)\nk} + \frac{\Stk}{p\nk} - \frac{\Stck}{\nk}\right)\\
&=\frac{1}{n_c-1}\sum_{k=1}^K\frac{\nk}{n}\left(\meanpock- \meanpoc\right)^2+ \frac{1}{n_t-1}\sum_{k=1}^K\frac{\nk}{n}\left(\meanpotk- \meanpot\right)^2\\
&-\left(\sum_{k=1}^K\frac{n-\nk}{n^2(n_c-1)}\Sck+\sum_{k=1}^K\frac{n-\nk}{n^2(n_t-1)}\Stk-\sum_{k=1}^K\frac{\nk}{n^2}\Stck\right).
\end{align*}

If there is no variation between the blocks (i.e. $\meanpozk = \meanpoz$ for all $k$) but there is within block variability (i.e. $\Szk \neq 0$) then this difference will be negative.
The reduction is blunted, however, by the degree of treatment variation there is within block (making $\Stck$ offset the negative term from the $\Szk$).
\end{proof}

\subsection{Proof of Corollary~\ref{cor:pretend_cr_strat}}\label{append:use_cr_strat_samp}

\begin{proof}
We start from the result of Appendix \ref{append:use_cr_fin_samp} and utilize work done in Appendix \ref{append:cr_vs_bk_strat_samp} to simplify things.

\begin{align*}
\EE&\left[\EE\left[\frac{\szest}{n_z}|\mathcal{S}, \textbf{P}_{blk}\right]|\model_{\text{strat}}\right]\\
=&\EE\left[\sum_{k=1}^K\left(\frac{\nk}{nn_z}-\frac{\EE\left[\mathbb{I}_{Z_i=z}|\textbf{P}_{blk}\right](n-\nk)}{n_zn(n_z-1)}\right)\Szk+ \frac{1}{n_z-1}\sum_{k=1}^K\frac{\nzk}{n_z}\left(\meanpozk- \meanpoz\right)^2|\model_{\text{strat}}\right]\\
=&\sum_{k=1}^K\left(\frac{\nk}{nn_z}-\frac{\EE\left[\mathbb{I}_{Z_i=z}|\textbf{P}_{blk}\right](n-\nk)}{n_zn(n_z-1)}\right)\sigmazk+ \EE\left[\frac{1}{n_z-1}\sum_{k=1}^K\frac{\nzk}{n_z}\left(\meanpozk- \meanpoz\right)^2| \model_{\text{strat}}\right]\\
=&\sum_{k=1}^K\left(\frac{\nk}{nn_z}-\frac{n-\nk}{n^2(n_z-1)}\right)\sigmazk+ \frac{1}{n_z-1}\sum_{k=1}^K\frac{n - \nk}{n^2}\sigmazk+\frac{1}{n_z-1}\sum_{k=1}^K\frac{\nk}{n}\left(\popmeanpozk- \popmeanpoz\right)^2\\
=&\sum_{k=1}^K\frac{\nk}{nn_z}\sigmazk+\frac{1}{n_z-1}\sum_{k=1}^K\frac{\nk}{n}\left(\popmeanpozk- \popmeanpoz\right)^2\\
\end{align*}

Now to get the bias we have

\begin{align*}
\EE&\left[\frac{\scest}{n_c} + \frac{\stest}{n_t}|\textbf{P}_{blk}, \model_{\text{strat}}\right] - \sum_{k=1}^K\frac{\nk^2}{n^2}\left(\frac{\sigmack}{\nck} + \frac{\sigmatk}{\ntk}\right)\\
=&\sum_{k=1}^K\frac{\nk}{nn_c}\sigmack+\frac{1}{n_c-1}\sum_{k=1}^K\frac{\nk}{n}\left(\popmeanpock- \popmeanpoc\right)^2 + \sum_{k=1}^K\frac{\nk}{nn_t}\sigmatk+\frac{1}{n_t-1}\sum_{k=1}^K\frac{\nk}{n}\left(\popmeanpotk- \popmeanpot\right)^2\\
&- \sum_{k=1}^K\frac{\nk^2}{n^2}\left(\frac{\sigmack}{\nck} + \frac{\sigmatk}{\ntk}\right)\\
=&\frac{1}{n_c-1}\sum_{k=1}^K\frac{\nk}{n}\left(\popmeanpock- \popmeanpoc\right)^2 +\frac{1}{n_t-1}\sum_{k=1}^K\frac{\nk}{n}\left(\popmeanpotk- \popmeanpot\right)^2.
\end{align*}
\end{proof}

%\newpage

\end{appendices}

\end{document}